\DeclarePairedDelimiter{\ceil}{\lceil}{\rceil}
\DeclarePairedDelimiter{\floor}{\lfloor}{\rfloor}
\newtheorem{problem}[lemma]{Problem}
\newcommand{\vc}[1]{\bm{#1}\xspace}
\newcommand{\AP}[1][]{\ensuremath{\mathcal{X}_{#1}}\xspace}
\newcommand{\SCH}{\ensuremath{\ell}\xspace}
\newcommand{\Sch}[1]{\ensuremath{\ell(#1)}\xspace}
\newcommand{\VSch}[2][]{\ensuremath{%
\ifthenelse{\equal{#1}{}}{\ell^{-1}(#2)}{\ell^{-1}_{#1}(#2)}}\xspace}
\newcommand{\PCost}[3][]{\ensuremath{%
\ifthenelse{\equal{#1}{}}{c_{#2}(#3)}{c^{(#1)}_{#2}(#3)}}\xspace}
\newcommand{\MCost}[2][]{\ensuremath{%
\ifthenelse{\equal{#1}{}}{C(#2)}{C^{(#1)}(#2)}}\xspace}
\newcommand{\OPTVALUE}{\ensuremath{C^*}\xspace}
\newcommand{\OptTreeCost}[1][]{\ensuremath{C^{\text{MST}}_{#1}}\xspace}
\newcommand{\OPTMINMAX}{\ensuremath{\text{OPT}_{\text{Min-Max}}}\xspace}
\newcommand{\OPTTOUR}{\ensuremath{\ell^*}\xspace}
\newcommand{\OPTTSP}{\ensuremath{\text{OPT}_{\text{TSP}}}\xspace}
\newcommand{\MST}{\ensuremath{\text{MST}}\xspace}
\begin{document}

\pagestyle{empty}

\title{A Class of Weighted TSPs with Applications}
\author{David Kempe \qquad Mark Klein\\ University of Southern California}

\maketitle

\begin{abstract}

Motivated among others by applications to the prevention of poaching
or burglaries, we define a class of weighted Traveling Salesman
Problems on metric spaces.
The goal is to output an infinite (though typically periodic) tour
that visits the $n$ points repeatedly,
such that no point goes unvisited for ``too long.''
We consider two objective functions for each point $x$.
The \emph{maximum} objective is simply the maximum duration of any
absence from $x$,
while the \emph{quadratic} objective is the normalized sum of squares
of absence lengths from $x$.
The overall minimization objective is the weighted maximum of the
individual points' objectives.
When a point has weight $w_x$, the absences under an optimal tour
should be roughly a $1/w_x$ fraction of the absences from points of
weight 1.
Thus, the objective naturally encourages visiting high-weight points
more frequently, and at roughly evenly spaced intervals.

We give a polynomial-time combinatorial algorithm whose output is
simultaneously an $O(\log n)$ approximation under both objectives.
We prove that up to constant factors, approximation guarantees for the
quadratic objective directly imply the same guarantees for a natural
security patrol game defined in recent work.

\end{abstract}

\section{Introduction} \label{sec:introduction}

In the classic metric Traveling Salesman Problem (TSP),
we are given $n$ points in a metric space,
and the goal is to compute a tour that visits each point (at least) once,
returns to the starting point,
and (approximately) minimizes the total distance traveled.
The name is derived from the story of a traveling
salesman who needs to sell his product in each of $n$ cities,
and wants to return home as quickly as possible.%
\footnote{Beyond the namesake story, the TSP has found many
important real-world applications,
including vehicle routing, wiring of computers or drilling of holes in
chip boards, and job sequencing \cite{LLRS:tsp-survey-book}.}
  
In reality, the need for the salesman's product or service will
not typically arise once at time 0 and be forever met with the
salesman's visit.
Instead, demand for the product will arise over time in the
population, and the salesman will need to return to the same cities
repeatedly to serve the demand that has accrued since his most
recent visit. 
Naturally, demand in larger cities accrues more quickly, so larger
cities should be visited more frequently.
Unserved demand leads to disutility, and the salesman's objective
function in choosing an (infinite) tour is to minimize the overall
disutility.


This view motivates the following class of
\emph{weighted Traveling Salesman Problems}
(defined more formally in Section~\ref{sec:preliminaries}).
Each of the $n$ points in the metric space has a weight $w_x \geq 0$.
For each point $x$, a (infinite) visit schedule induces a distribution
of times (total distance traveled) between consecutive visits to $x$
(which we call \emph{absence lengths}). 
For the two versions of the problem we study,
the cost $c_x$ of a point $x$ is either
(1) the maximum absence length from $x$
(we call this version the \emph{weighted Max-TSP}),
or
(2) the expected absence length at a time $t$ that is chosen uniformly
randomly from the schedule%
\footnote{This interpretation has meaning only when the schedule is
  periodic. We define a more general notion in
  Section~\ref{sec:preliminaries}.},
which equals the normalized sum of squares of absence lengths
(we call this version the \emph{weighted Quadratic TSP}).
The goal is then to find an infinite schedule minimizing
$\max_x w_x c_x$.
The objective function encourages the salesman to visit
high-weight points more frequently;
and not only more frequently, but ``roughly evenly spaced.''

In terms of the traveling salesman,
this class of problems is motivated as follows.
After the salesman has committed to his tour,
a city $x$ is chosen adversarially, and develops a need for the
salesman's product/service.
The cost incurred is the duration until this need is met,
times the population of the city incurring the need.
If the time is also chosen adversarially,
we exactly obtain the weighted Max-TSP objective;
if the time is random, then the expected cost is (half) the weighted
Quadratic TSP objective.

An alternative objective
--- discussed briefly in Sections~\ref{sec:related-work} and
\ref{sec:conclusions} 
--- is to minimize the sum $\sum_x w_x c_x$ instead of the maximum.
These objective functions are motivated when the city is not chosen
adversarially, but rather, a random person from the population
develops a need for the product.

\subsection{A Second Application: Security Games}
\label{sec:security-games-intro}

A second important motivation arises from a class of security
games, especially in the prevention of poaching, illegal
harvesting, or burglary. 

In general, in security games \cite{tambe:security-games-book,balcan:blum:haghtalab:procaccia,xu:mysteries},
a defender needs to use limited resources to protect a set of $n$ targets
of non-uniform value $w_x$ from attack by a rational attacker.
The attacker can observe the (typically mixed) strategy chosen by the
defender, but usually not the defender's coin flips.
A wide literature has analyzed different variants of this problem,
with different combinatorial constraints on the actions of either or
both players; see \cite{tambe:security-games-book,xu:mysteries} for a
comprehensive survey and a general optimization treatment.

Motivated among others by anti-poaching
\cite{FNPLCASSTL:PAWS,fang:stone:tambe:green} 
and anti-burglary efforts,
a recent paper \cite{QuasiRegular} proposed a variant
in which the defender commits to a distribution over (infinite) tours
of the $n$ targets; in \cite{QuasiRegular}, the metric is assumed to
be uniform, i.e., all targets are at distance 1 from each other.
The attacker performs a single attack,
choosing a target $x$ to attack, as well as an attack duration $t$. 
The attacker's reward if not caught is $w_x \cdot t$, and if caught,
is 0.
In Section~\ref{sec:additional-results},
we show a reduction from this problem to the weighted quadratic TSP
objective discussed above; 
the reduction only loses a constant factor in the approximation.

If the attacker \emph{can} observe the defender's coin flips,
then the defender may as well not randomize.
A rational attacker will attack a target $x$ immediately after the
defender has left it for the longest possible interval,
and stay until right before the defender returns to it.
Thus, the attacker's reward for the best possible attack is exactly
the weighted Max-TSP objective.

\subsection{Our Results}
Our main result is an efficient (combinatorial) $O(\log n)$
approximation algorithm for both the weighted Max-TSP and the
weighted Quadratic TSP.
We state the result informally here; a formal statement and proof are
given in Section~\ref{sec:main-result}.

\begin{theorem}[Main Theorem (stated informally)]
  \label{thm:main-informal}
There is a polynomial-time approximation algorithm which returns a
solution that is simultaneously an $O(\log n)$ approximation for both
the weighted Max-TSP and weighted Quadratic TSP objectives.
\end{theorem}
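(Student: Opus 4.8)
The plan is to produce a single periodic schedule whose maximum absence length from every point $x$ is within an $O(\log n)$ factor of $C^{*}/w_{x}$; since the normalized sum of squares of the absence lengths from $x$ never exceeds the largest absence length from $x$, the \emph{same} schedule is then automatically an $O(\log n)$ approximation for the quadratic objective as well, once a matching lower bound for that objective is in hand.

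\emph{Lower bounds.} Scale so that $\max_{x}w_{x}=1$, and for $w>0$ write $S_{\ge w}=\{x:w_{x}\ge w\}$. The key bound is
\[
  C^{*}\;\ge\;w\cdot\MST(S_{\ge w})\qquad\text{for all }w>0,
\]
because in an optimal schedule every $x\in S_{\ge w}$ has maximum absence length at most $C^{*}/w_{x}\le C^{*}/w$, so every window of travel-length $C^{*}/w$ is a walk visiting all of $S_{\ge w}$ and hence has length at least $\MST(S_{\ge w})$. A complementary ``radius'' bound is $C^{*}\ge w_{x}\cdot\max_{y}d(x,y)$ for every $x$: if the absence length from $x$ is always at most $c_{x}$, the tour never leaves the ball $B(x,c_{x})$. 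For the quadratic objective the analogous estimates hold up to constants --- by Cauchy--Schwarz a point visited at time-rate $r$ has quadratic cost at least $1/r$, so an optimal quadratic schedule visits each point of $S_{\ge w}$ at rate at least $w/C^{*}$, and a walk covering $S_{\ge w}$ at rate $\rho$ has length at least $\rho\cdot\MST(S_{\ge w})$ per unit time.

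\emph{Algorithm.} After a routine preprocessing that merges points of negligible weight into the lightest class (and deletes weight-$0$ points, which cost nothing), bucket the points into classes $S_{0}\subseteq S_{1}\subseteq\cdots\subseteq S_{K}$ with $S_{\ell}=S_{\ge 2^{-\ell}}$ and $K=O(\log n)$. Build a \emph{nested} family of tours $T_{0}\subseteq\cdots\subseteq T_{K}$, where $T_{\ell}$ is obtained from $T_{\ell-1}$ by inserting the new points, so that $T_{\ell}$ spans $S_{\ell}$, the tours are literally nested, and $|T_{\ell}|\le 2\,\MST(S_{\ell})=:L_{\ell}$. Then run an infinite periodic schedule that, for every $\ell$ simultaneously, completes a traversal of $T_{\ell}$ roughly every $g_{\ell}:=\Theta(K)\cdot 2^{\ell}\cdot C^{*}$ units of time. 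The template is a binary ``counter'' schedule whose period consists of $2^{K}$ rounds, round $i$ performing one traversal of $T_{\ell(i)}$ where $\ell(i)$ is the $2$-adic valuation of $i$ capped at $K$; thus $T_{\ell}$ (hence $S_{\ell}$) is traversed once every $2^{\ell}$ rounds. The value $C^{*}$ is not known but is guessed by binary search, or simply replaced by the explicit lower bound above.

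\emph{Analysis, and the main obstacle.} Since a point new to class $\ell$ has weight at most $2^{-(\ell-1)}$, a bound of $g_{\ell}$ on its maximum absence length gives objective value at most $2^{-(\ell-1)}g_{\ell}=O(K)\cdot C^{*}=O(\log n)\cdot C^{*}$, for both objectives. To bound by $g_{\ell}$ the absence length of such an $x$, one adds up the lengths of the rounds separating two consecutive traversals of a tour spanning $x$; using $L_{\ell'}\le 2\,\MST(S_{\ell'})$ and $C^{*}\ge 2^{-\ell'}\MST(S_{\ell'})$, the contribution of the ``small'' rounds telescopes to $O(K)\cdot 2^{\ell}\cdot C^{*}=O(g_{\ell})$. \emph{The difficulty is entirely in the transition cost between rounds:} while the schedule traverses a long tour $T_{\ell'}$ with $\ell'$ large, it is not visiting the heavy classes, and with a careless choice of tours a single traversal of $T_{K}$ can leave a heavy point unvisited for the full length $L_{K}$, vastly exceeding its budget $g_{0}$. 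Overcoming this is exactly where the hierarchy is needed: the tours must be built recursively so that a single traversal of $T_{\ell}$ already revisits $S_{\ell-1},S_{\ell-2},\dots$ frequently, using only \emph{local} detours that can be charged against the MSTs of the appropriate superlevel sets, and the round structure must be refined so that a traversal of a long tour is itself chopped into pieces interleaved with visits to the heavier classes. Making this recursion and the accompanying charging argument precise --- so that each class is charged to the lower bounds with only an $O(\log n)$ loss --- is the technical core; the remaining pieces (periodicity, the reduction between the two objectives, and the preprocessing of extreme weights) are routine. An alternative route is to first embed the metric into a hierarchically separated tree, losing $\Theta(\log n)$, and solve the problem on the tree, where the hierarchy comes for free; the direct construction above has the advantage of being deterministic and combinatorial.
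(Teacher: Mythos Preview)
Your proposal correctly isolates the central obstacle but does not actually overcome it, and the paper's proof turns on precisely the idea you are missing.

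With nested tours $T_{0}\subseteq\cdots\subseteq T_{K}$ and the counter schedule you describe, a block of $2^{\ell}$ consecutive rounds contains one round whose $2$-adic valuation is at least~$\ell$; that round may traverse $T_{K}$, whose length is $\Theta(2^{K}C^{*})$. A weight-$1$ point then suffers an absence of order $2^{K}C^{*}$, not $O(K\,C^{*})$. You say the fix is to ``chop'' long tours and interleave heavier classes via local detours, but you give no construction or charging argument, and this is exactly the hard part. The paper avoids the issue by \emph{not} using nested tours at all: it works with the \emph{disjoint} weight classes $\AP[i]=\{x:w_{x}=2^{-i}\}$ and, crucially, invokes the Min--Max Spanning Tree $4$-approximation of Even~et~al.\ to split $\AP[i]$ into $\theta_{i}=\min(n_{i},2^{i})$ trees, each of cost $O(2^{-i}\cdot\MST(\AP[i]))=O(C^{*})$. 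Every individual subtour is therefore short, a single ``phase'' (one subtour per list) costs $O((\log n)\,C^{*})$, and a point at level $i$ waits at most $2^{i}$ phases. This Min--Max decomposition is the missing technical idea; your nested-tour framework has no analogue of it.

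Two further points. First, your quadratic lower bound is not established: from ``each $x\in S_{\ge w}$ is visited at rate $\ge w/C^{*}$'' one cannot directly conclude $C^{*}\ge\Omega(w\cdot\MST(S_{\ge w}))$, since high visit rates say nothing about spanning the set in any bounded window. The paper proves the needed bound (its Lemma on quadratic cost versus TSP) via a nontrivial fractional Steiner-tree argument, and this lemma is essential for the simultaneous guarantee. Second, the reduction to $K=O(\log n)$ weight classes is not as routine as you suggest: rounding tiny weights \emph{up} can change the optimum, and your lower bound $C^{*}\ge 2^{-K}\MST(S_{K})$ does not survive if $S_{K}$ contains points of weight far below $2^{-K}$. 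The paper sidesteps this by observing that $\sum_{i}\theta_{i}\le\sum_{i}n_{i}=n$, so there are at most $n$ subtours in total and hence at most $\lceil\log_{2}(n+1)\rceil$ lists, regardless of the weight range.
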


As a corollary, we obtain $O(\log n)$ approximation algorithms for the
security game scheduling problems discussed in
Section~\ref{sec:security-games-intro}.

In our algorithm, we first round weights to powers $2^{-i}$.
The points of weight $2^{-i}$ are partitioned (approximately
optimally) into $2^i$ spanning trees $T^i_j$, such that no spanning
tree is too large.
Each spanning tree is shortcut into a tour, and those tours are
carefully sequenced, such that tours for points of higher weight are
repeated correspondingly more frequently.

In Section~\ref{sec:preliminaries}, we give
an approximation-preserving reduction from the
standard metric TSP to the weighted Max-TSP,
implying APX-hardness of the latter by \cite{papadimitriou:yannakakis:TSP}.
However, there is an obvious gap between the constant lower bound for
approximability and the $O(\log n)$ upper bound we obtain.
The question of whether the approximation guarantee can be improved to
a constant is a natural question for future work.

\subsection{Related Work} \label{sec:related-work}
The motivation for our problem is similar to that of the well-known
Minimum Latency Problem
\cite{arora:karakostas:latency,BCCPRS:minimum-latency,goemans:kleinberg:latency}
(also called the Traveling Repairman Problem).
In the Minimum Latency Problem, the objective is to minimize the \emph{sum}
of first arrival times to each point in a given (finite) metric space;
a natural motivation is that a task is to be performed at each point,
and the goal is to serve each point as quickly as possible on average.
The high-level approach of most approximation algorithms
for this problem is to consider trees of exponentially increasing
total cost, each spanning (approximately) as many points as possible,
then to execute the sequence of Eulerian tours of those trees.
The arrival time at a point that first appears in tree $i$ is
dominated by the cost of the \Kth{i} tree,
and the fact that the point was in no earlier tree serves as a
certificate that not many more points could have been visited faster.
A constant-factor approximation algorithm for the minimum-cost tree
spanning $k$ vertices \cite{garg:k-MST} therefore yields a
constant-factor approximation for the Minimum Latency Problem.

The fact that each point only needs to be visited once sets the Minimum
Latency Problem apart from our problem.
Once a point has been visited, it can be ignored, whereas in our
problem, a high-weight point must be visited repeatedly.
Following the ``exponentially increasing trees'' approach outlined
above, the final tour (of largest weight) visits all nodes,
and will therefore result in a long absence from each point.
In our weighted TSP, if one or a few points have high weight,
this can cause the objective function to be a factor of $\Omega(n)$
from optimal (e.g., in a star in which the center node has weight $n$).
As a result, it appears unlikely that algorithms based on this
approach can yield good approximation guarantees for our problem.
Indeed, our algorithm in Section~\ref{sec:main-result} is based
instead on a decomposition into multiple spanning trees, each of which
is sufficiently small and gives rise to a tour; these tours are then
carefully sequenced to ensure that high-weight points are visited
regularly enough.
    
As further evidence of the difference between the problems,
notice that in the Minimum Latency Problem, a point $x$ of weight
$w_x$ can simply be replaced by $w_x$ points of weight 1;
i.e., the unweighted problem is fully general.
By contrast, for the Weighted Max-TSP objective, the objective becomes
exactly the standard TSP objective when all weights are 1,
and for the Weighted Quadratic TSP with uniform weights,
the optimum TSP solution is within a constant factor of optimal
(see Section~\ref{sec:preliminaries}).

Closely related to the Minimum Latency Problem are the broad classes of
problems in vehicle routing and maintenance/machine scheduling. 
In vehicle routing, one or more vehicles must visit locations to serve demands.
Different variants have different constraints on the number or
initial locations of vehicles, time windows during which the cities
must be visited, etc.
The objectives are typically to minimize the total length/distance of
a tour, or maximize the demand served within a given amount of time
\cite{lenstra:kan}.
Constant-factor approximations are known for many variants of the
problem, but as with the Minimum Latency Problem, the techniques do
not appear to carry over to our problem, because each city must only
be visited once.

One recent approach that has improved approximation guarantees
for (multi-vehicle) Minimum Latency problems is the use of
time-indexed configuration LPs \cite{chakrabarty:swamy,post:swamy}.
Here, for each time $t$ and path $P$ (starting from an appropriate
origin node), a variable $z_{t,P}$ captures whether $P$ is the path
taken up to time $t$ (possibly with another index for the vehicle).
In the fractional relaxation, one obtains a ``distribution'' of
arrival times at nodes $v$; it can be shown that sufficiently strong
versions of these configuration LPs (which capture joint paths for all
vehicles) have small integrality gaps. The analysis exploits more
sophisticated variants of the arguments outlined above for trees of
expanding total cost.

While configuration LPs are an approach that explicitly models time in
the routing problem,
it is not clear that it can be extended to our weighted TSPs.
The configuration LPs do not (and do not need to) enforce any
consistency between paths up to time $t$ and paths up to time $t+1$.
It is not at all clear how one would express the time between
\emph{consecutive} visits to a point in terms of the LP variables of
configuration LPs, though it is an intriguing possibility that another
type of configuration LP could yield improved approximation guarantees.

The variant of vehicle routing most similar to our problem is the
vehicle routing problem with time windows, or TSP with time windows
(TSPTW).
Here, each city has a single time window, and in order to serve its
demand, the city must be visited during this window.
The objective is usually phrased as maximizing the number/weight of
cities whose time window is met by the vehicle(s).
The approximability of this problem has been well studied
for different metric spaces.
For general metrics, \cite{bockenhauer:hromkovic:kneis:kupke} showed
that TSPTW does not admit a constant-factor approximation unless
$\text{P} = \text{NP}$.
The best known approximation for general metric spaces is an
$O(\log^2(n))$ approximation by \cite{bansal:blum:chawla:meyerson}.
They first give an $O(\log(n))$ approximation for the deadline TSP
problem, and then shows that any $\alpha$-approximation for deadline
TSP leads to an $\alpha^2$ approximation for TSPTW.
In addition, \cite{bansal:blum:chawla:meyerson} give a bicriteria
approximation scheme that allows for a parameter $\epsilon$ indicating
by what factor deadlines may be exceeded.

The bicriteria approximation scheme of
\cite{bansal:blum:chawla:meyerson} could possibly be leveraged to
address our weighted TSPs.
After guessing the objective value, one obtains absence times for each
point, and can --- for each point --- partition the timeline into
intervals of approximately that length, each interval corresponding to
a new ``city'' in the TSPTW problem.
It might be possible to search for a suitable $\epsilon$ and use the
algorithm from \cite{bansal:blum:chawla:meyerson} as a certifier of
feasibility or infeasibility of an objective value.
However, it is not clear whether or how the approximation guarantees
from \cite{bansal:blum:chawla:meyerson} will translate into
approximation guarantees for our weighted TSPs, especially because
our problem effectively requires an infinite number of time windows
and approximation guarantees of \cite{bansal:blum:chawla:meyerson} are
given in terms of the number of time windows (since there is one time
window for each city).


An objective function more closely related to ours is minimized in
machine repair scheduling problems
\cite{bar-noy:bhatia:naor:schieber,anily:glass:hassin}.
Here, each day, a limited number of machines can be repaired,
and the goal is to choose a schedule that minimizes the expected value
of a randomly selected machine's weight,
scaled linearly by  the number of days since it was last repaired.
This of course necessitates that each machine be visited repeatedly.
However, any machine can be repaired in any time step, which
implicitly corresponds to the uniform metric also studied in
\cite{QuasiRegular}.
For the uniform metric, constant approximation can be easily obtained,
but it does not appear that these can be extended easily to general
metrics.


Much recent work has focused on improving the approximation guarantee
for the metric TSP below the factor $\frac{3}{2}$ which one obtains
from Christofides's Algorithm.
This approach is orthogonal to ours, as we currently do not even know
of a constant-factor approximation for our class of problems.
The techniques used in this line of work are typically based on
rounding the Held-Karp relaxation.
Given that our problems are much more sensitive to the specific
sequencing of the visits, it appears unlikely that any modification of
Held-Karp (or other standard LP rounding approaches) would be useful.

As we mentioned above, our work is directly motivated by recent work
\cite{QuasiRegular} on security games for poaching or burglary
prevention. \cite{QuasiRegular} studied only the uniform metric,
and gave \emph{optimal} algorithms for a natural class of objectives.
Because we show that the objective function of \cite{QuasiRegular} is
within a constant factor of the Weighted Quadratic TSP objective, 
our result gives an approximation guarantee for the objective of
\cite{QuasiRegular} for arbitrary metrics, as shown in
Section~\ref{sec:additional-results}.
A similar motivation of preventing wildlife poaching underlies recent
work \cite{immorlica:kleinberg:bandits} on recharging bandits.
Among others, \cite{immorlica:kleinberg:bandits} define a maximization
version in which the reward at each point is an increasing concave
function of the time since the last visit, and the goal is to find a
tour (approximately) maximizing the total reward collected.
As with \cite{QuasiRegular}, \cite{immorlica:kleinberg:bandits} only
considers the case of uniform distances; obtaining provable
approximation guarantees for non-uniform distances in the maximization
version is an interesting direction for future work.

One special class of security games most closely related to our
objective is that of \emph{adversarial patrolling games}
\cite{basilico:gatti:amigoni:leader,vorobeychik:an:tambe:patrolling}.
Similar to our work, the goal there is to define a tour such that an
attacker who gets to observe the tour will cause as little damage as
possible. The exact objective functions differ from ours,
and the literature focuses on heuristics or exponential-time
algorithms rather than polynomial-time algorithms with provable
approximation guarantees.

\section{Definitions and Preliminary Results} \label{sec:preliminaries}
The metric space consists of a set \AP of $n \geq 3$ points $x \in \AP$
with positive distances $d(x,y) > 0$.
For the purpose of this paper,
a \emph{tour} is a mapping $\SCH : \N \to \AP$.
The interpretation is that the salesman visits the points in the order
$\Sch{1}, \Sch{2}, \ldots$.
Tours will typically visit points infinitely often,
and in principle may skip some points completely or for long periods of time;
however, such tours are far from optimal.

The total distance or time (we use the terms interchangeably)
traveled by the salesman under \SCH from step $t$ to $t'$ is
$d_{\SCH}(t,t') := \sum_{\tau=t}^{t'-1} d(\Sch{\tau}, \Sch{\tau+1})$.
In order to evaluate the quality of a tour,
we are interested in the maximum absence from any point,
and in the normalized sum of squares of absences.
Fix a point $x$, and let \VSch[x]{k} be the time step in which the
\Kth{k} visit to point $x$ occurs under \SCH.
Precisely, $\Sch{\VSch[x]{k}} = x$, and there are exactly $k$ times
$t \leq \VSch[x]{k}$ with $\Sch{t} = x$.
If a schedule visits point $x$ only $k$ times,
then $\VSch[x]{k'} = \infty$ for $k' > k$.
For notational convenience, we define $\VSch[x]{0} = 1$.
We then define the following two cost functions:

\begin{align}
\PCost[\infty]{x}{\SCH}
  & = \max_{k=0,1,2,\ldots} d_{\SCH}(\VSch[x]{k},\VSch[x]{k+1}), \label{eqn:point-cost-infty}
\\
\PCost[2]{x}{\SCH}
  & = \limsup_{k \to \infty}
    \frac{\sum_{k'=0}^{k} (d_{\SCH}(\VSch[x]{k'},\VSch[x]{k'+1}))^2}{%
    d_{\SCH}(1,\VSch[x]{k+1})}. \label{eqn:point-cost-p}
\end{align}  

\begin{remark}
A few remarks are in order about these definitions:
\begin{enumerate}
\item The first objective --- which we call the \emph{weighted Max-TSP
    objective} --- simply measures the longest absence from $x$,
which will be infinite if $x$ is only visited finitely many times.

\item The second objective --- which we call the \emph{weighted Quadratic
  TSP objective} --- intuitively captures the expected length of an
absence interval, if intervals are selected with probability
proportional to their length.
If \SCH is periodic, this is exactly the definition. 

\item The normalization in the definition of \PCost[2]{x}{\SCH} 
is necessary to ensure that the cost is invariant under
concatenating a periodic schedule.
In other words, we want to ensure that the cost is the same
for a finite schedule repeated infinitely,
and the schedule repeated twice repeated infinitely.

\item In the definition of \PCost[\infty]{x}{\SCH},
unless $\PCost[\infty]{x}{\SCH} = \infty$,
the maximum is actually attained.
The reason is the following.
The length of any subtour is the sum of a multiset of pairwise
distances.
Because the distances are lower-bounded by the shortest distance,
there are only finitely many different lengths of tours below any
given bound $B$.
Hence, if the supremum is $B < \infty$, it can only occur as a maximum.
\end{enumerate}
\end{remark}

Each point $x$ has a weight $w_x > 0$.
Without loss of generality (by rescaling),
we assume that the largest weight is 1.
Using these weights,
the overall cost function is the weighted maximum of the individual
points' cost functions.
Formally, it is defined as follows (for $p \in \SET{2, \infty}$):

\begin{align}
\MCost[p]{\SCH}
  & = \max_x w_x \PCost[p]{x}{\SCH}. \label{eqn:max-cost}
\end{align}


\subsection{Preliminary Results}

We begin with an easy approximation-preserving reduction from the
metric TSP to the weighted Max-TSP.
This result establishes APX-hardness using the APX-hardness of the
metric TSP \cite{papadimitriou:yannakakis:TSP},
and the bound will also be important for our analysis in
Section~\ref{sec:main-result}.

\begin{lemma} \label{lem:TSP-to-weighted}
Consider an instance in which all weights are the same:
$w_x = 1$ for all $x$.
Let \OPTTSP be the cost of the optimum TSP tour,
and \OPTVALUE the cost of the optimum tour with respect to the
\MCost[\infty]{\SCH} objective.
Then, $\OPTTSP = \OPTVALUE$.
\end{lemma}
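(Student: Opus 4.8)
The plan is to show the equality $\OPTTSP = \OPTVALUE$ by proving inequalities in both directions, each via an explicit construction that converts one type of solution into the other without loss.

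For the direction $\OPTVALUE \leq \OPTTSP$, I would take an optimal (finite) metric TSP tour $\tau$ visiting all $n$ points and returning to its start, of total length $\OPTTSP$. Repeating this tour infinitely defines a tour $\SCH$ in the sense of the paper. For any point $x$, each absence interval of $\SCH$ is a contiguous stretch of the cyclic order between two consecutive occurrences of $x$; since $x$ occurs once per period, every such absence has length exactly $\OPTTSP$ (the full length of one traversal, starting and ending at $x$ — or more carefully, any absence is a sub-walk of the cyclic tour from one visit of $x$ to the next, which has length exactly the period $\OPTTSP$). Hence $\PCost[\infty]{x}{\SCH} = \OPTTSP$ for all $x$, and since all $w_x = 1$, we get $\MCost[\infty]{\SCH} = \OPTTSP$, so $\OPTVALUE \leq \OPTTSP$.

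For the reverse direction $\OPTTSP \leq \OPTVALUE$, I would start from an optimal infinite tour $\SCH$ with $\MCost[\infty]{\SCH} = \OPTVALUE$ (finite, else nothing to prove). Since every point has weight $1$, every point $x$ has $\PCost[\infty]{x}{\SCH} \leq \OPTVALUE$; in particular every point is visited infinitely often (a finitely-visited point has infinite cost). Fix any index $t_0$; starting from time $t_0$, walk along $\SCH$ until every one of the $n$ points has been visited at least once, say this happens by time $t_1$, and then continue until the next return to the point $\Sch{t_0}$, at time $t_2$. The closed walk $\Sch{t_0}, \Sch{t_0+1}, \ldots, \Sch{t_2} = \Sch{t_0}$ visits all $n$ points and is closed, so shortcutting it (using the triangle inequality) yields a TSP tour of length at most $d_{\SCH}(t_0, t_2)$. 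The remaining task is to bound $d_{\SCH}(t_0, t_2)$ by $\OPTVALUE$: the natural choice is to pick $t_0$ to be a visit to the point $x^\star$ that is ``most constraining.'' Concretely, I would choose $t_0$ so that the stretch $[t_0, t_2]$ is exactly one absence interval of some point — specifically, let $x^\star$ be the last of the $n$ points to be visited in the window starting at $t_0$, i.e., choose $t_0$ just after a visit to $x^\star$, so that $[t_0, \VSch[x^\star]{k+1}]$ for the appropriate $k$ is an absence interval of $x^\star$, of length at most $\PCost[\infty]{x^\star}{\SCH} \leq \OPTVALUE$, and within that single absence of $x^\star$ all other points are necessarily visited. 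Shortcutting that sub-walk gives a TSP tour of length $\leq \OPTVALUE$, hence $\OPTTSP \leq \OPTVALUE$.

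The one subtlety — and the main thing to get right — is the argument that there exists a starting time $t_0$ such that a single absence interval of one point already contains visits to all $n$ points. This follows by a counting/pigeonhole argument: take any long enough window of $\SCH$; it contains at least one full absence interval for each point; consider, over all points $x$ and all their absence intervals $I$, whether $I$ contains all $n$ points — if some absence interval of some point contains all points we are done; the point is that the \emph{longest} absence interval among all points, say an absence of $x^\star$ of length $L \le \OPTVALUE$, must contain a visit to every other point $y$, because otherwise $y$ would have an absence interval strictly containing this one, of length $> L$, contradicting maximality (here I use that all points are visited infinitely often so absence intervals tile the timeline for each point). Thus the longest absence interval overall witnesses a closed sub-walk of length $\le \OPTVALUE$ hitting all points, and shortcutting finishes the proof. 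Combining the two inequalities yields $\OPTTSP = \OPTVALUE$.
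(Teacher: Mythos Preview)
Your proof is correct and, once you arrive at the final argument, it is essentially the same as the paper's: pick the longest absence interval over all points, observe that every other point must be visited inside it (else that point would have a strictly longer absence, contradicting maximality), and use the resulting closed sub-walk as a TSP tour of length at most $\OPTVALUE$. The earlier part of your reverse-direction sketch (walking until all points are hit and then returning to $\Sch{t_0}$) is unnecessary scaffolding; the maximality argument in your last paragraph is the whole proof, and it matches the paper's almost verbatim.
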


\begin{proof}
First, repeating the optimum TSP tour gives a tour in which each
absence from each point $x$ (and in particular the maximum absence) is
exactly \OPTTSP.
Therefore, $\OPTVALUE \leq \OPTTSP$.

For the converse inequality, let \OPTTOUR be the optimum tour.
Let $x$ be a point maximizing the length of any absence between
consecutive visits to $x$. 
Let $t, t'$ be time steps of consecutive visits to $x$ under \OPTTOUR
achieving this maximum,
and let $D = d_{\OPTTOUR}(t,t')$
be the total time between these consecutive visits.

Between $t$ and $t'$, all points must be visited at least once.
If not, then a point $y$ not visited between these steps would have a
longer absence interval than $x$,
namely, at least from step $t-1$ to $t'+1$.
Because each point is visited at least once,
the subtour of \OPTTOUR from step $t$ until $t'$ is a candidate TSP
tour, showing that $\OPTTSP \leq \OPTVALUE$.
\end{proof}

\begin{lemma} \label{lem:periodic-is-near-OPT}
Under the weighted Quadratic TSP objective, for every tour \SCH,
there is a periodic tour $\SCH'$ such that
$\MCost[2]{\SCH'} \leq 3 \MCost[2]{\SCH}$.
\end{lemma}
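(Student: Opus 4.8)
The plan is to show that an arbitrary tour can be truncated to a finite segment and then repeated periodically, at the cost of only a constant factor in every point's quadratic objective. First I would fix a tour $\SCH$ and assume without loss of generality that $\MCost[2]{\SCH} = \MVal < \infty$ (otherwise there is nothing to prove). For each point $x$, the quantity $\PCost[2]{x}{\SCH}$ is a $\limsup$ of ratios $R^x_k := \frac{\sum_{k'=0}^{k}(d_{\SCH}(\VSch[x]{k'},\VSch[x]{k'+1}))^2}{d_{\SCH}(1,\VSch[x]{k+1})}$; in particular, for every $\epsilon > 0$ there is an index beyond which all these ratios are below $w_x^{-1}\MVal + \epsilon$. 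The idea is to pick a single cutoff step $T$ that works simultaneously for all $n$ points, form the finite schedule $\SCH'$ consisting of steps $1,\dots,T$ repeated forever, and bound $\PCost[2]{x}{\SCH'}$ in terms of $R^x_{k(x)}$, where $\VSch[x]{k(x)+1}$ is the last visit to $x$ at or before time $T$.

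The key steps, in order. (1) Choose $T$ large enough that (a) every point is visited at least once in $\{1,\dots,T\}$, and (b) for every $x$, the partial ratio $R^x_{k(x)}$ at the last visit $\le T$ is at most, say, $2 w_x^{-1}\MVal$ --- this is possible because each $\PCost[2]{x}{\SCH}$ is a finite $\limsup$, so only finitely many initial ratios can exceed twice the bound, and we take $T$ past the worst such index over all $n$ points. (2) Analyze one period of $\SCH'$: when we concatenate the block $[1,T]$ with a copy of itself, the only "new" absence interval for $x$ that did not already appear in $[1,\VSch[x]{k(x)+1}]$ is the wrap-around interval $I_x$ from the last visit of $x$ in one block to the first visit of $x$ in the next block; its length is at most $d_{\SCH}(\VSch[x]{k(x)+1},T) + d(\Sch{T},\Sch{1}) + d_{\SCH}(1,\VSch[x]{1}) \le L_{\text{tail}} + L_{\text{head}}$, a quantity independent of which period we are in. (3) Compute $\PCost[2]{x}{\SCH'}$: over $N$ periods, the numerator is $N$ times (sum of squares of the interior intervals plus the square of the wrap-around interval), and the denominator is $N$ times the length of one period $P := d_{\SCH'}(1,T) + d(\Sch{T},\Sch{1})$; the $N$'s cancel, so $\PCost[2]{x}{\SCH'} = \frac{(\text{interior sum of squares}) + |I_x|^2}{P}$. (4) Bound each piece: the interior sum of squares over the length $d_{\SCH}(1,\VSch[x]{k(x)+1})$ of the truncated prefix is exactly $R^x_{k(x)} \le 2w_x^{-1}\MVal$ by step (1), and $d_{\SCH}(1,\VSch[x]{k(x)+1}) \le P$; the extra term $|I_x|^2/P$ is bounded because $|I_x|$ is at most the length of a full extra traversal around the period, and more carefully $|I_x| \le (\text{longest absence of }x\text{ in }\SCH) \le w_x^{-1}\PCost[\infty]{}{}$-type quantity, which by the $\limsup$ definition is itself controlled by $w_x^{-1}\MVal$ (any single interval length is at most $\sqrt{R^x_k \cdot d_{\SCH}(1,\VSch[x]{k+1})}$, giving $|I_x| \le \sqrt{2 w_x^{-1}\MVal \cdot P}$, hence $|I_x|^2/P \le 2w_x^{-1}\MVal$). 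Adding the interior and wrap-around contributions gives $\PCost[2]{x}{\SCH'} \le 2w_x^{-1}\MVal + 2w_x^{-1}\MVal$, so $w_x \PCost[2]{x}{\SCH'} \le 4\MVal$; tightening the constant in step (1) from $2$ to a value close to $1$ on the interior term (and using $|I_x|^2/P \le |I_x|^2/d_{\SCH}(1,\VSch[x]{k(x)+1}) $ more sharply) will bring the total down to the claimed factor $3$.

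The main obstacle I expect is controlling the wrap-around interval $I_x$ and showing its squared length, divided by the period length, is $O(w_x^{-1}\MVal)$ rather than something larger. The subtlety is that $|I_x|$ can be as long as almost a full period $P$ (if $x$ happens to be visited only near the start and end of the block), in which case $|I_x|^2/P \approx P$, which is bad unless $P$ itself is $O(w_x^{-1}\MVal)$. The fix is to choose the truncation point $T$ carefully: rather than an arbitrary large $T$, I would take $T = \VSch[x_0]{k}$ to be a visit time of the \emph{heaviest-constrained} point, or more robustly, truncate at a step $T$ such that the wrap-around distance $d(\Sch{T},\Sch{1})$ is small and such that $T$ is a visit step of some fixed reference point whose absence bound forces $P$ to be small; alternatively, one shows that if $P$ were much larger than $\max_x w_x^{-1}\MVal$, then already in $\SCH$ some point would have an absence interval forcing $\PCost[2]{x}{\SCH} > \MVal$, a contradiction. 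Getting the bookkeeping so that all $n$ constraints are respected simultaneously by a single choice of $T$, while keeping the constant at $3$, is the delicate part; everything else is the routine telescoping/cancellation of the $N$-fold concatenation.
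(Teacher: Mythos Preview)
Your high-level plan --- truncate $\SCH$ at a large step $T$ and repeat the prefix periodically --- is exactly what the paper does, and you correctly single out the wrap-around interval $I_x$ as the only nontrivial term. But step~(4) has a real gap, and the fixes you sketch in the last paragraph aim at the wrong target.

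The gap is the inequality $|I_x| \le \sqrt{2\,w_x^{-1}\MCost[2]{\SCH}\cdot P}$. You justify it by the (correct) observation that any \emph{single} absence interval $d$ of $\SCH$ satisfies $d^2 \le R^x_k \cdot d_{\SCH}(1,\VSch[x]{k+1})$. But $I_x$ is not an interval of $\SCH$: it is built from a prefix of the overshooting interval $d_{\SCH}(\VSch[x]{k(x)+1},\VSch[x]{k(x)+2})$, the wrap edge, and the head interval $d_{\SCH}(1,\VSch[x]{1})$. For the overshooting piece the relevant denominator is $d_{\SCH}(1,\VSch[x]{k(x)+2})$, which can be much larger than $P$, so you cannot simply replace it by $P$. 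Your proposed remedies --- truncating at a visit of a reference point, or arguing that $P$ itself must be $O(w_x^{-1}\MCost[2]{\SCH})$ --- do not help: $P$ is a quantity \emph{you} choose and must be large (to get past the bad indices), not small.

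The paper closes this gap with a \emph{second} largeness condition on the truncation time, in addition to your ratio condition~(b). Namely, it also requires $T$ large enough that for every $x$ and every sufficiently late $k$,
\[
d_{\SCH}(1,\VSch[x]{k+1}) \;\le\; \sqrt{3/2}\cdot d_{\SCH}(1,\VSch[x]{k});
\]
such a $T$ exists because otherwise infinitely many absence intervals from $x$ would be a constant fraction of their prefix length, forcing $\PCost[2]{x}{\SCH}=\infty$. With this in hand, the paper bounds the wrap-around length by $d_{\SCH}(\VSch[x]{k_x},\VSch[x]{k_x+1}) + d_{\SCH}(1,\VSch[x]{1})$, squares with a factor $2$, observes that both squared terms already sit in the numerator of $R^x_{k_x}$, and uses the new condition to compare the period length (which is at least $d_{\SCH}(1,\VSch[x]{k_x})$) to the denominator $d_{\SCH}(1,\VSch[x]{k_x+1})$ of $R^x_{k_x}$. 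Sharpening the constant in your condition~(b) from $2$ to $\sqrt{3/2}$ and combining with the $\sqrt{3/2}$ from the new condition gives exactly $2\cdot\sqrt{3/2}\cdot\sqrt{3/2}=3$. The missing idea, then, is not a clever choice of truncation point but this explicit control on how much the prefix length can grow between one visit to $x$ and the next.
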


\begin{proof}
The claim is trivial if $\MCost[2]{\SCH} = \infty$,
so we focus on the case $\MCost[2]{\SCH} < \infty$.
Let $k_0$ be large enough such that for all $k \geq k_0$ and all
points $x$, we have 
\[\frac{\sum_{k'=0}^{k} (d_{\SCH}(\VSch[x]{k'},\VSch[x]{k'+1}))^2}{%
d_{\SCH}(1,\VSch[x]{k+1})}
\leq \sqrt{3/2} \cdot \PCost[2]{x}{\SCH}.\]
Let $t_0 = \max_x \VSch[x]{k_0+1}$,
and let $t_1$ be large enough such that for all $x$ and all
$k$ with $\VSch[x]{k} \geq t_1$, we have that
$d_{\SCH}(1,\VSch[x]{k+1}) \leq \sqrt{3/2} \cdot d_{\SCH}(1,\VSch[x]{k})$.
Such a $t_1$ exists:
otherwise, there would be arbitrarily large $k$ such that 
$d_{\SCH}(\VSch[x]{k},\VSch[x]{k+1}) > (\sqrt{3/2}-1) \cdot
d_{\SCH}(1,\VSch[x]{k})$;
because this means that a constant fraction of arbitrarily long
subtour lengths is spent on the last absence from $x$,
the quadratic objective for $x$ would diverge to $\infty$.

Let $\hat{t} = \max(t_0, t_1)$,
and let $\SCH'$ be the periodic schedule repeating
$\Sch{1}, \Sch{2}, \ldots, \Sch{\hat{t}}$.

Fix a point $x$, and let $k_x \geq k_0$ be the number of absence
intervals from $x$ in \SCH that are contained in $\SCH'$.
For $k = 1, \ldots, k_x-1$, the absence intervals have lengths
$d_{\SCH} (\VSch[x]{k}, \VSch[x]{k+1})$.
The final absence interval, in which the tour wraps around,
has length at most 
$d_{\SCH} (\VSch[x]{k_x}, \VSch[x]{k_x+1}) + d_{\SCH} (1, \VSch[x]{1})$.
In the quadratic objective, we are interested in the square of this
length, which is at most
\[
  (d_{\SCH} (\VSch[x]{k_x}, \VSch[x]{k_x+1}) + d_{\SCH} (1, \VSch[x]{1}))^2
\; \leq \; 
2 d_{\SCH}^2 (\VSch[x]{k_x}, \VSch[x]{k_x+1}) +
2 d_{\SCH}^2 (1, \VSch[x]{1}).
\]

Thus, the objective function value for $x$ under $\SCH'$ is at most
\begin{eqnarray*}
&& \frac{2 d_{\SCH}^2(1, \VSch[x]{1})
    + 2 d_{\SCH}^2(\VSch[x]{k_x},\VSch[x]{k_{x}+1})
    + \sum_{k=1}^{k_x-1} d_{\SCH}^2(\VSch[x]{k},\VSch[x]{k+1})}{%
d_{\SCH}(1,\VSch[x]{k_x})}
\\ & \leq &
\frac{2\sum_{k=0}^{k_x} d_{\SCH}^2(\VSch[x]{k},\VSch[x]{k+1})}{%
\sqrt{2/3} \cdot d_{\SCH}(1,\VSch[x]{k_x+1})}
\\ & \leq & 3 \PCost[2]{x}{\SCH}.
\end{eqnarray*}
Since this holds for all points $x$, the proof is complete.
\end{proof}

The following lemma compares the optimum TSP cost to the (unweighted)
Quadratic TSP objective. It is central to the proof of our main
theorem.

\begin{lemma} \label{lem:quadratic-vs-TSP}
Consider an instance in which all weights are the same:
$w_x = 1$ for all $x$.
Let \OPTTSP be the cost of the optimum TSP tour,
and \OPTVALUE the cost of the optimum tour with respect to the
\MCost[2]{\SCH} objective.
Then, $\OPTTSP \leq 480 \cdot \OPTVALUE$.
\end{lemma}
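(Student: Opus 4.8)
\section*{Proof proposal for Lemma~\ref{lem:quadratic-vs-TSP}}

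The plan is to first normalize the quadratic‑optimal tour and then argue that a tour of small quadratic cost forces the point set to be ``cheap to connect.'' Concretely, let $\OPTVALUE$ denote the quadratic optimum. Using Lemma~\ref{lem:periodic-is-near-OPT}, I would fix an $\varepsilon>0$ and a \emph{periodic} tour $\SCH$ with $\MCost[2]{\SCH}\le 3\OPTVALUE+\varepsilon=:C$ and period length $L=d_{\SCH}(1,\VSch[x]{m}+1)$ for one full period; at the end I let $\varepsilon\to 0$. Two elementary consequences of $\PCost[2]{x}{\SCH}\le C$ for every $x$ will be used throughout. First, by Cauchy--Schwarz on the absence lengths $a^x_1,\dots,a^x_{m_x}$ of $x$ within one period (which sum to $L$), we get $\sum_j (a^x_j)^2 \ge L^2/m_x$, so $m_x \ge L/C$: every point is visited $\Omega(L/C)$ times per period. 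Second, $\max_j a^x_j \le \sqrt{\sum_j (a^x_j)^2} \le \sqrt{CL}$, so \emph{no} absence is longer than $\sqrt{CL}$; consequently, running $\SCH$ forward from any time, all $n$ points are seen within distance $\sqrt{CL}$, i.e.\ some contiguous window of $\SCH$ of length $\le\sqrt{CL}$ already visits every point. Since shortcutting a walk that visits all points into a closed tour only decreases its length in a metric, this already gives $\OPTTSP \le 2\sqrt{CL}$, and trivially $\OPTTSP\le L$ (one period is such a walk). I would then aim to prove $\OPTTSP\le \text{(const)}\cdot C$, which via Lemma~\ref{lem:TSP-to-weighted} is the same as bounding the optimum of the $\MCost[\infty]{\cdot}$ objective, and is equivalent up to a factor $2$ to bounding the MST.

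If $L = O(C)$ we are done immediately from $\OPTTSP\le L$, and if $C \le L = O(C^2/\,?\,)$ --- more precisely whenever $\sqrt{CL}=O(C)$, i.e.\ $L=O(C)$, or slightly more generally whenever the bound $2\sqrt{CL}$ is itself $O(C)$ --- we are again done. So the real content is the regime $L\gg C$. Here the tool I would use is the decomposition of one period of $\SCH$ at the visits to the \emph{least‑visited} point $x^\ast$: these cut the period into $m^\ast\ge L/C$ closed sub‑tours ("loops") $\mathcal L_1,\dots,\mathcal L_{m^\ast}$ based at $x^\ast$, with $\sum_i|\mathcal L_i|=L$ and $\sum_i|\mathcal L_i|^2 = \PCost[2]{x^\ast}{\SCH}\cdot L \le CL$; in particular every loop has length $\le\sqrt{CL}$, and (charging the squared lengths) the loops of length $>\gamma C$ have total length $<L/\gamma$. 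Two structural facts follow. (i) Any loop of length $\le \gamma C$ keeps the tour within distance $\gamma C$ of $x^\ast$, so every point lying on a short loop belongs to the ball $B(x^\ast,\gamma C)$. (ii) If a point $y$ lies on only one loop, that loop must have length $\ge L-\sqrt{CL}$, which exceeds $\tfrac L2$ once $L\gg C$; combined with the total‑length bound this forces essentially all points onto a single loop or shows every point lies on many loops. The intended use of (i)/(ii) is to split $V$ into the part inside a ball $B(x^\ast,\Theta(C))$ and a residual part, recurse on the residual part (whose points are all "far‑loop" points, so the total length devoted to them is a constant fraction smaller), and at the end stitch the per‑level balls and tours together through the common center points --- the total cost then telescoping into a geometric series whose sum is $O(C)$, which is where the explicit constant $480$ would come from.

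The step I expect to be the main obstacle is exactly this recursion/amortization: making sure the recursion both \emph{terminates quickly} (each peel removes enough points, or the period shrinks by a constant factor, rather than only one point at a time) and \emph{does not accumulate a $\log n$ factor} when the per‑scale contributions are summed --- the naive "handle each dyadic scale, lose $O(C)$ per scale" argument would only give an $O(\log n)$‑type bound, which is precisely the weaker guarantee we want to beat here. Controlling this requires using the inequality $\sum_i|\mathcal L_i|^2\le CL$ (equivalently the small quadratic cost of the \emph{residual} point sets) quantitatively at every level, rather than just the qualitative facts "every point is visited often" and "every absence is short," since those alone are consistent with a large period and do not by themselves yield the contradiction. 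Once the recursion is set up so that the parameter it carries (a bound on the quadratic cost restricted to the current point set, relative to its own period) stays within a constant factor of $C$, summing the ball radii over the recursion levels gives $\MST=O(C)$, hence $\OPTTSP\le 2\MST=O(C)$, and tracking the constants through Lemma~\ref{lem:periodic-is-near-OPT}, the walk‑to‑tour shortcutting, and the geometric sum yields the stated $\OPTTSP\le 480\cdot\OPTVALUE$.
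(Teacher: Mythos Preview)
Your setup is sound: the reduction to a periodic tour via Lemma~\ref{lem:periodic-is-near-OPT}, the Cauchy--Schwarz bound $m_x\ge L/C$, the bound $\max_j a^x_j\le\sqrt{CL}$, and hence $\OPTTSP\le 2\sqrt{CL}$ are all correct and useful observations. The decomposition into loops at a least-visited point $x^\ast$ is also a reasonable starting point.

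The gap is exactly where you put your finger on it. The recursion you sketch is not actually carried out, and as stated it does not obviously close. Peeling off $B(x^\ast,\Theta(C))$ removes all ``short-loop'' points, but you have no control on how many points that is, nor on the MST cost of the points inside the ball (a ball of radius $\Theta(C)$ containing $k$ points can have MST cost $\Theta(kC)$). For the residual ``far'' points you would like the effective period to drop by a constant factor, but the only thing you know is that the total length of long loops is at most $L/\gamma$ of squared length, which controls the fraction of $L$ they occupy, not the new period after shortcutting. Without a genuine invariant that forces a geometric decrease, the per-level $O(C)$ charges accumulate, and you end up with the $O(\log n)$ bound you explicitly want to avoid. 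In short: you have correctly identified the obstacle, but the proposal does not contain an idea that overcomes it.

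The paper's proof sidesteps the recursion entirely with a one-shot argument. Set $q=\OPTTSP/16$ and split into two cases. If edges of length $\ge q$ make up $\ge 1/10$ of the period, a convexity bound on the squared absence lengths already gives $\PCost[2]{x}{\SCH}\ge q/10$ for every $x$. Otherwise, delete the long edges and greedily chop what remains into $K$ paths $E_1,\dots,E_K$ each of length in $[q,2q)$; these cover at least $4/5$ of the period. The key step is Lemma~\ref{lem:steiner-trees}: if every point appeared in more than $3/4$ of the $E_i$, the averaged indicator vector would be a feasible fractional Steiner tree solution of cost $<4q$, yielding (via the integrality gap of $2$) a spanning tree of cost $<8q$ and hence a TSP tour of cost $<16q=\OPTTSP$, a contradiction. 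So some point $x$ misses $\ge K/4$ of the $E_i$, contributing $\ge (K/4)q^2$ to the numerator against a denominator $<5qK/2$, again giving $\PCost[2]{x}{\SCH}\ge q/10$. The factor $3$ from periodicity then yields $480$. The fractional-LP averaging is what replaces your recursion and is the missing idea in your proposal.
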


\begin{proof}
By Lemma~\ref{lem:periodic-is-near-OPT}, we can focus, at a loss of a
factor of 3, on periodic tours.
Consider a periodic tour \SCH of $\hat{t}$ steps,
with $\Sch{\hat{t}+1} = \Sch{1}$.
Let $q = \OPTTSP/16$.
We consider two cases, based on whether \SCH contains many long edges
or not.
Let $T$ be the set of all ``long'' time steps $t$, i.e.,
time steps $t$ with $d(\Sch{t}, \Sch{t+1}) \geq q$.

\begin{itemize}
\item In the first case, we assume that
$\sum_{t \in T} d(\Sch{t}, \Sch{t+1})
\geq \frac{1}{10} \cdot d_\SCH(1, \hat{t}+1)$.
Consider an arbitrary point $x$.
Because the tour is absent from $x$ for at least each time interval
corresponding to a step $t \in T$,
the numerator of the quadratic TSP objective for $x$
is at least
\[
  \sum_{t \in T} d^2(\Sch{t}, \Sch{t+1})
\; \geq \; |T| \cdot \left( \frac{1}{|T|} \cdot \sum_{t \in T}
  d(\Sch{t}, \Sch{t+1}) \right)^2
\; = \; \frac{\left( \sum_{t \in T} d(\Sch{t}, \Sch{t+1}) \right)^2}{|T|},
\]
where the inequality was using convexity.
The denominator of the quadratic objective is
$d_\SCH(1, \hat{t}+1) \leq 10 \sum_{t \in T} d(\Sch{t}, \Sch{t+1})$,
so the quadratic TSP objective for $x$ is at least
\[
\frac{\sum_{t \in T} d(\Sch{t}, \Sch{t+1})}{10 |T|}
\; \geq \; \frac{q}{10}
\; = \; \frac{1}{160} \cdot \OPTTSP.
\]

\item Otherwise, we will focus exclusively on time steps $t$ for which
$d(\Sch{t}, \Sch{t+1}) < q$.
Produce subtours (paths) of length between $q$ (inclusive) and $2q$
(exclusive) as follows:
\begin{enumerate}
\item Remove from \SCH all edges of length strictly more than $q$.
  This produces maximal subpaths with the property that each such
  subpath is immediately preceded and followed by an edge of length
  strictly greater than $q$ in \SCH. (Except in the case that there
  were no such long edges; this case is trivial.)
\item Discard all subpaths whose total length is strictly less than
  $q$.
\item For each remaining subpath, of length at least $q$, partition it
  greedily into segments of length between $q$ (inclusive) and $2q$
  (exclusive). If as a result of this greedy partition, there is a
  (final) segment of length strictly less than $q$, discard it.
  Notice that such a partitioning is always possible because each
  remaining edge has length at most $q$.
\end{enumerate}

Let $K$ be the number of such subtours that are produced,
and $E_1, E_2, \ldots, E_K$ the corresponding edge sets;
each $E_i$ defines a path (though not necessarily a simple path).
We first show that the subtours together comprise at least
a $4/5$ fraction of the total length of \SCH.
The reason is the following: whenever a subtour is discarded, its
length was strictly less than $q$, and it must have been immediately
followed by an edge of length greater than $q$.
For each long edge, at most one short subtour is discarded due to this
edge, so the total length of discarded edges (long edges and short
tours) is at most twice the total length of long edges, which is at
most $1/10$ the total length of all edges (because of the current case).

We will prove below that at least one point $x$ must be absent from at
least $1/4$ of these subtours $E_i$.
Then, in the numerator of the quadratic TSP objective for this point $x$,
there will be at least $K/4$ terms of at least $q^2$ each.
The total length of all the $E_i$ is less than $2qK$, and the total
length of discarded edges is at most another $qK/2$.
Thus, the denominator is less than $5qK/2$,
and the quadratic TSP objective is at least $q/10 = \OPTTSP/160$.

It remains to prove that some point $x$ is absent from at least $1/4$ of the
subtours.
Assume for contradiction that every point is in more than
$3/4$ of the subtours $E_i$.
Each $E_i$ induces a single connected component on the nodes it
includes, and has total cost at most $2q$.
By Lemma~\ref{lem:steiner-trees} below
(with $\gamma = \frac{1}{4}$ and $C=2q$),
this implies the existence of a tree spanning all points of cost
strictly less than $8 q$,
and thus a TSP tour of total cost strictly less than $16 q$,
by shortcutting the Eulerian tour of the spanning tree.
This is a contradiction to the definition that
$q = \OPTTSP/16$.
\end{itemize}

Taking the maximum of both cases,
and accounting for the factor of 3 we lost by focusing on a periodic schedule,
gives us the claimed bound.
\end{proof}

\begin{lemma} \label{lem:steiner-trees}
Let $G$ be an undirected graph with non-negative edge weights.
Let $E_1, E_2, \ldots, E_K \subseteq E$ be connected edge sets
of cost $c(E_i) < C$,
i.e., each $E_i$ induces a single connected component
(with the remaining vertices being isolated).
Assume that each vertex is part of at least one edge in at least a
$1-\gamma$ fraction of the edge sets $E_i$.
Then, $G$ has a spanning tree of total cost strictly less than
$\frac{2}{1-2 \gamma} \cdot C$.
\end{lemma}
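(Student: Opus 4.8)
The plan is to build a spanning tree by greedily merging the connected edge sets $E_i$, using the density hypothesis to guarantee that the merging process cannot get stuck with too many components left over. The high-level strategy: repeatedly pick a connected edge set $E_i$ and contract it, keeping a running forest $F$ whose components are exactly the ``super-nodes'' obtained so far; each time we add some $E_i$, we pay at most $C$ and we merge all the components it touches into one. Since a spanning tree on a graph with $p$ current components needs $p-1$ merges, the total cost is at most $(\text{number of merges})\cdot C$, so the real content is to bound the number of $E_i$'s we must use.

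\textbf{Key steps.} First I would observe that if $F$ is the current forest and it has $p \geq 2$ connected components (on the ground set of all $n$ vertices), then there must exist some $E_i$ that touches at least two distinct components of $F$. Suppose not: then every $E_i$ lies inside a single component of $F$. Fix two components $A$ and $B$ of $F$. Every vertex of $A$ is covered by at least $(1-\gamma)K$ of the sets $E_i$, and each such $E_i$ is contained in $A$; likewise each vertex of $B$ is covered by $(1-\gamma)K$ sets contained in $B$. Since sets contained in $A$ and sets contained in $B$ are disjoint families, we get $(1-\gamma)K + (1-\gamma)K \leq K$, i.e.\ $2(1-\gamma) \leq 1$, i.e.\ $\gamma \geq \tfrac12$ --- contradicting $\gamma < \tfrac12$ (note $\tfrac{2}{1-2\gamma}\cdot C$ is only finite when $\gamma < \tfrac12$, so this is the relevant regime). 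Hence as long as $F$ is not connected, some $E_i$ straddles $\geq 2$ components, and adding its edges to $F$ (discarding edges that would create a cycle) strictly decreases the number of components while adding cost $< C$.

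\textbf{Counting the merges and finishing.} Each time we incorporate such an $E_i$, the number of components drops by at least $1$. But I want a better bound to get the constant $\tfrac{2}{1-2\gamma}$ rather than something like $nC$. The refinement: when $F$ has $p$ components, a single set $E_i$ whose edge-count hypothesis holds can be shown to touch many of them on average, but more simply, I would argue by a potential/counting argument on how many $E_i$ can be ``internal'' to a single component. Alternatively --- and this is cleaner --- track the process differently: whenever we add $E_i$ it either reduces components by $1$ (a ``useful'' step) or it is discarded; the total cost is (number of useful steps)$\cdot C$. A useful step reduces the component count, which starts at $n$ and must reach $1$, so naively $n-1$ useful steps, giving cost $< (n-1)C$, which is too weak. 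To get the stated bound I instead set up the merging to be unbalanced-aware: the correct argument (which I'd verify) is that the final tree can be taken to consist of at most $\tfrac{2}{1-2\gamma}$ many of the $E_i$'s, because each newly added $E_i$ must connect two previously-separate components, and a pigeonhole on the covering fraction shows that after $t$ sets have been used, they jointly cover a $1 - t\gamma$-ish fraction... The hard part will be getting exactly the constant $\tfrac{2}{1-2\gamma}$: I expect the right bookkeeping is to note that if we have used $t$ sets $E_1,\dots,E_t$ and their union is still disconnected with components $A_1,\dots,A_p$, then summing the covering bound over one vertex per component and using disjointness of ``sets internal to $A_j$'' forces $p(1-\gamma)K \le K + (\text{sets used so far counted with multiplicity})$, which after rearrangement caps the number of sets needed at $\tfrac{2}{1-2\gamma}$; multiplying by the cost bound $C$ per set yields total cost strictly less than $\tfrac{2}{1-2\gamma}\cdot C$. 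The main obstacle is pinning down this counting inequality so that the strict inequality and the exact constant both come out, rather than an off-by-a-factor bound.
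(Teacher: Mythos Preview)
Your greedy merging argument correctly shows that while $\gamma<\tfrac12$ and the current forest is disconnected, some $E_i$ must straddle two of its components. But the proof collapses at exactly the step you flag as ``the main obstacle'': there is no reason only $\frac{2}{1-2\gamma}$ sets are needed, and the bound ``(number of sets used)$\,\times\,C$'' is genuinely too weak. For a concrete obstruction, take $G$ to be a star with center $c$, leaves $v_1,\dots,v_n$, all edge weights $\epsilon$; let each $E_i$ be the sub-star on $c$ together with a $3/4$ fraction of the leaves, arranged (via a covering design on the complements) so that every leaf lies in $3K/4$ of the $E_i$ (hence $\gamma=1/4$) yet no four of the $E_i$ jointly cover all leaves. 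Then $C\approx \tfrac34 n\epsilon$, the unique spanning tree costs $n\epsilon<4C$, but your greedy must use at least five sets, and charging $C$ per set already exceeds $4C=\frac{2}{1-2\gamma}C$. By pushing the covering-design parameter, the number of $E_i$'s needed to cover all leaves can be made $\Omega(\log n)$, so your per-set accounting is off by a logarithmic factor in the worst case. (The greedy \emph{output} here is fine --- it is the unique spanning tree --- but your \emph{analysis} cannot certify that.) The sketched inequality ``$p(1-\gamma)K\le K+\cdots$'' does not lead anywhere: it mixes up how many $E_i$ contain a given vertex with how many $E_i$ are confined to a given component, and no rearrangement of it yields the claimed cap on the number of sets.

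The paper's proof is entirely different and sidesteps the counting problem. It forms the scaled average $v_e=\frac{1}{(1-2\gamma)K}\sum_i \mathbf{1}[e\in E_i]$ and checks that it is feasible for the fractional spanning/Steiner tree LP: for any cut $(S,\bar S)$, pick $x\in S$ and $y\notin S$; at most $\gamma K$ of the $E_i$ miss $x$ and at most $\gamma K$ miss $y$, so at least $(1-2\gamma)K$ contain both and, being connected, cross the cut. This fractional solution has cost strictly less than $\frac{C}{1-2\gamma}$, and the Goemans--Williamson integrality gap of $2$ then gives an integral spanning tree of cost strictly less than $\frac{2}{1-2\gamma}C$. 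The missing idea is to pass through the LP relaxation rather than to assemble the tree combinatorially from whole $E_i$'s.
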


\begin{proof}
Let $v^i_e = 1$ if the edge set $E_i$ contains the edge $e$,
and $v^i_e = 0$ otherwise.
Let $v_e = \frac{1}{(1-2 \gamma) \cdot K} \cdot \sum_{i=1}^K v^i_e$.
We claim that the vector $\vc{v}$ satisfies the fractional Steiner
Tree LP
\[ \begin{array}{ll}
     \sum_{e \in (S, \Compl{S})} v_e \; \geq \; 1
     & \text{ for all cuts } (S, \Compl{S}),\\
     v_e \; \geq \; 0
     & \text{ for all edges } e.
\end{array} \]      
Consider any cut $(S, \Compl{S})$, and let $x \in S, y \notin S$ be
arbitrary.
Because at most a $\gamma$ fraction of the edge sets $E_i$
exclude each of $x$ and $y$,
at least a $1-2\gamma$ fraction of the $E_i$ include both $x$ and $y$.
By the connectivity assumption for the $E_i$,
each such $E_i$ must contain a path from $x$ to $y$,
and in particular an edge crossing $(S, \Compl{S})$.
Thus, we get that
\begin{align*}
  \sum_{e \in (S, \Compl{S})} v_e
  & = \frac{1}{(1-2\gamma) \cdot K} \cdot \sum_{i=1}^K \sum_{e \in (S, \Compl{S})} v^i_e 
    \; \geq \; \frac{1}{(1-2\gamma) \cdot K} \cdot (1-2\gamma) \cdot K
    \; = \; 1.
\end{align*}

The cost of this fractional solution is at most
\begin{align*}
\sum_{e=(x,y)} d(x,y) \cdot v_e
& = \sum_{e=(x,y)} d(x,y) \cdot \frac{1}{(1-2\gamma) \cdot K} \cdot \sum_{i=1}^K v^i_e
\\ & = \frac{1}{(1-2 \gamma) \cdot K} \cdot \sum_{i=1}^K \sum_{e=(x,y)} d(x,y) v^i_e
  \; < \; \frac{1}{(1-2\gamma) \cdot K} \cdot K \cdot C
  \; = \; \frac{C}{1-2\gamma}.
\end{align*}

Goemans and Williamson \cite{goemans:williamson:constrained-forest}
show that the integrality gap of the Steiner Forest LP is at most 2;
in particular, this means that there is a spanning tree of all the
nodes of total cost strictly less than
$\frac{2}{1-2\gamma} \cdot C$.
\end{proof}

\section{Main Result and Proof} \label{sec:main-result}
In this section, we prove the main result of our paper.

\begin{theorem} \label{thm:main-formal}
There is a polynomial-time algorithm (Algorithm~\ref{algo:log-approx})
which outputs a periodic tour $\hat{\SCH}$ simultaneously guaranteeing that
\begin{align*}
\MCost[2]{\hat{\SCH}} & \leq O(\log n) \cdot \MCost[2]{\SCH^*_{2}}, \\
\MCost[\infty]{\hat{\SCH}} & \leq O(\log n) \cdot \MCost[\infty]{\SCH^*_{\infty}}, \\
\end{align*}
where $\SCH^*_2$ and $\SCH^*_{\infty}$ are the optimal solutions for
the respective objective functions.

In other words, $\hat{\SCH}$ is simultaneously an $O(\log n)$
approximation for both of the weighted TSP objectives we consider.
\end{theorem}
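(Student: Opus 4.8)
The plan is to produce a single periodic tour $\hat\SCH$ and show that its cost under \emph{both} objectives is $O(\log n)$ times one common combinatorial quantity $\Lambda$, where $\Lambda$ is at the same time $O(1)\cdot\MCost[\infty]{\SCH^*_\infty}$ and $O(1)\cdot\MCost[2]{\SCH^*_2}$; the theorem then follows immediately. Throughout, normalize $\max_x w_x=1$ and fix $r$ with $w_r=1$. For $i\ge 0$ let $S_i=\{x: w_x\ge 2^{-i}\}$ and put $\Lambda=\max_{i\ge 0}\OptTreeCost[S_i]/2^i$ (a maximum, since $S_i=\AP$ once $2^i\ge 1/\min_x w_x$).

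First I would establish the two lower bounds. For the Max objective: any window of $\SCH^*_\infty$ of length $\ge 2^i\MCost[\infty]{\SCH^*_\infty}$ visits every point of $S_i$ (otherwise some absence exceeds its budget), so the walk traversed in that window connects $S_i$ and hence $\OptTreeCost[S_i]\le 2^i\MCost[\infty]{\SCH^*_\infty}$; the same argument at $i=0$ also shows that every absence from $r$ spanning a visit to a point $v$ has length $\ge 2d(r,v)$, so $\MCost[\infty]{\SCH^*_\infty}\ge 2\max_v d(r,v)$ and the metric has diameter $O(\MCost[\infty]{\SCH^*_\infty})$. For the Quadratic objective: apply Lemma~\ref{lem:quadratic-vs-TSP} (rescaled by $2^{-i}$) to the sub-instance on $S_i$ with all weights equal to $2^{-i}$; its optimal quadratic value is at most $\MCost[2]{\SCH^*_2}$ because $\SCH^*_2$ restricted to $S_i$ is feasible (and one may take it periodic via Lemma~\ref{lem:periodic-is-near-OPT}), so $S_i$ admits a TSP tour, hence a spanning tree, of cost $O(2^i\MCost[2]{\SCH^*_2})$. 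Taking the worst $i$ gives $\Lambda=O(\MCost[\infty]{\SCH^*_\infty})$ and $\Lambda=O(\MCost[2]{\SCH^*_2})$.

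For the algorithm itself, round each weight down to a power of two and form levels: for $0\le i<m:=\lceil\log_2 n\rceil$, level $i$ is the set $P_i$ of points of rounded weight $2^{-i}$, and level $m$ collects all points of rounded weight $\le 2^{-m}$ — each such point has true weight $<1/n$, and this slack is what makes the merge harmless. For level $i<m$, take a minimum spanning tree of $P_i$ (cost $O(\Lambda 2^i)$ by the bound above, since $P_i\subseteq S_i$), double-and-shortcut it to a closed tour $\sigma_i$ of length $O(\Lambda 2^i)$, remove the (at most $O(2^i)$) edges of $\sigma_i$ longer than $\Theta(\Lambda)$, and chop the remaining subpaths greedily into $O(2^i)$ arcs of length $O(\Lambda)$; for level $m$, the diameter bound gives $|\sigma_m|=O(\Lambda n)$ and $|P_m|\le n$, again $O(n)=O(2^m)$ arcs. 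Replace each arc by the out-and-back sub-tour it induces and pad with empty sub-tours so that level $i$ has exactly $2^i$ of them. Finally concatenate all sub-tours into $\hat\SCH$ with a schedule making higher-weight levels recur proportionally more often: round-robin over the levels $0,\dots,m$, and within the slots given to a level cycle through its $2^i$ sub-tours, so each single level-$i$ sub-tour recurs once per $\Theta(2^i\log n)$ slots. (A recursive ``odometer'' schedule, wrapping one full traversal of $\sigma_i$ around the schedule for levels $<i$, is an equivalent variant that keeps the transition bookkeeping cleaner.)

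In the analysis, a slot costs $O(\Lambda)$ for its own arc plus a transition to the next arc; since arcs of level $i$ live inside $S_i$ (or, for $i=m$, inside a ball of radius $O(\Lambda n)$), successive arcs of the same level are joined by an edge of $\sigma_i$ (amortized $O(\Lambda)$), while a transition between levels $i<i'$ costs at most $\OptTreeCost[S_{i'}]=O(\Lambda 2^{i'})$ — I expect \emph{this} amortization, spread over the $\Theta(2^{i'})$ slots between occurrences of level $i'$, to be the hardest step, and it is exactly the reason the bound $\OptTreeCost[S_i]=O(\Lambda 2^i)$ must hold for \emph{every} $i$, not just the surviving $O(\log n)$ levels. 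Granting per-slot amortized length $\Theta(\Lambda)$, the strictly periodic round-robin visits every point of level $i$ at essentially even spacing $\Theta(\Lambda 2^i\log n)$, so $\PCost[\infty]{x}{\hat\SCH}$ and $\PCost[2]{x}{\hat\SCH}$ are \emph{both} $O(\Lambda 2^i\log n)$ — this near-evenness is precisely what lets one tour serve both objectives. Multiplying by the weight, a level-$i$ point with $i<m$ has $w_x\le 2^{-i+1}$, giving $w_x\PCost[p]{x}{\hat\SCH}=O(\Lambda\log n)$, and a level-$m$ point has $w_x\le 1/n$ while $\PCost[p]{x}{\hat\SCH}=O(\Lambda n\log n)$, again $O(\Lambda\log n)$; hence $\MCost[\infty]{\hat\SCH},\MCost[2]{\hat\SCH}=O(\Lambda\log n)$, and combining with the two lower bounds finishes the proof. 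The subtle point throughout is that the construction must not ``overserve'' the long and unavoidable excursions to far, light points in a way that would be harmless for the Max objective but would wreck the ratio against the possibly much smaller Quadratic optimum — which is why the whole analysis is run against $\Lambda$ rather than against either optimum directly.
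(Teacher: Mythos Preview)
Your overall architecture is close to the paper's: round weights to powers of two, split each weight class into roughly $2^i$ short sub-tours, and interleave them so that level-$i$ sub-tours recur once every $\Theta(2^i\log n)$ slots. The paper obtains the sub-tours via the Min-Max Spanning Tree algorithm of Even et al.\ rather than by chopping an MST-based Euler tour, and it organizes the sub-tours into $O(\log n)$ \emph{lists} (by global index, not by weight class) before doing the round-robin; but these differences are cosmetic, and your use of the cumulative sets $S_i$ for the lower bound is a perfectly good substitute for the paper's per-class argument via Lemma~\ref{lem:quadratic-vs-TSP}.

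The genuine gap is in the transition-cost step you yourself flag as ``the hardest.'' In your round-robin, level $i'$ is visited \emph{once per round} (every $m+1=\Theta(\log n)$ slots), not once per $\Theta(2^{i'})$ slots; only the individual sub-tours within level $i'$ have the $\Theta(2^{i'})$ periodicity. So the transition into level $i'$, which you correctly bound by $\OptTreeCost[S_{i'}]\le\Lambda\,2^{i'}$, is paid every round. Summing over $i'=0,\dots,m$ gives a per-round transition cost of order $\Lambda\,2^m=\Theta(\Lambda n)$, and hence a per-slot amortized cost of $\Theta(\Lambda n/\log n)$, not $O(\Lambda)$. Concretely, with a single weight-$1$ point $r$ and a single weight-$1/n$ point $v$ at distance $D$, one has $\Lambda=D/n$ while the diameter is $D$; your bound $O(\Lambda\log n)$ would force the output to cost $O((D/n)\log n)$, which is impossible since any tour visiting $v$ incurs absences of length $\ge 2D$ from $r$.

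The fix is not to try to amortize transitions against $\Lambda$ at all. The paper simply bounds every transition by the diameter $\max_{x,y}d(x,y)$ and proves directly that the diameter is at most $\MCost[2]{\SCH^*_2}$ (using the weight-$1$ point, as you do for the Max objective). Equivalently, you can redefine your benchmark to $\Lambda'=\Lambda+\max_{x,y}d(x,y)$, check that $\Lambda'=O(\MCost[p]{\SCH^*_p})$ for both $p$, and then each slot (arc plus one transition) honestly costs $O(\Lambda')$. With that correction your argument goes through and matches the paper's $O(\log n)$ bound. A minor additional point: ``$P_i\subseteq S_i$ implies $\OptTreeCost[P_i]\le\OptTreeCost[S_i]$'' is not literally true for MSTs of subsets; you need the extra factor $2$ from the Steiner-tree-versus-MST comparison, which is harmless here but should be stated.
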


First, by standard arguments, rounding all weights down to the nearest
power of 2 and then solving the resulting problem loses at most a
factor of 2 in the approximation guarantee.
We therefore assume from now on that all weights are powers of 2.
For each $i = 0, 1, \ldots, M$,
let \AP[i] be the set of all points of weight $2^{-i}$,
and $n_i = \SetCard{\AP[i]}$ the number of points of weight $2^{-i}$.
Let $M$ be largest such that at least one point has weight $2^{-M}$.

The high-level intuition behind our algorithm is the following:
in order to achieve a good objective value, each point should be
visited with frequency proportional to its weight.
However, those visits should also be ``roughly evenly spaced;''
many visits in short succession, followed by a long absence,
do not help either of the objectives.
To build such a tour systematically,
we consider individual tours for points of the same weight;
i.e., tours for points of weight $2^{-i}$, for each $i$.
These tours must then be sequenced carefully,
so that tours of high-weight points are more frequent.

More precisely, points of weight $2^{-i}$ are partitioned into $2^i$
sets, with a near-optimal tour through each set.
One \emph{phase} then consists of following one such tour for each $i$.
This means that each tour for a subset of points of weight $2^{-i}$
is followed roughly once every $2^i$ phases, balancing out the lower weight.
A straightforward implementation of this approach would give an
$O(M)$-approximation;
a more careful arrangement of phases
and subtours improves this bound to $O(\log n)$.

From the preceding discussion, it is clear that an important part of
the algorithm is the ability to partition all points of a given weight
$2^{-i}$ into $2^i$ sets so that the tour lengths through all of
these sets are comparable.
Thereto, a core subroutine is an approximation algorithm for
the Min-Max Spanning Tree Problem \cite{even:garg:konemann:ravi:sinha},
defined as follows:

\begin{problem}[Min-Max Spanning Tree]
Given an edge-weighted graph $G = (V,E)$ and a parameter $k$,
find $k$ edge sets $E_i \subseteq E$ such that each $E_i$ induces a
connected subgraph of $G$, and each node of $G$ is in at least one of
the induced subgraphs.
Subject to this constraint, minimize $\max_i c(E_i)$, where the cost
of an edge set is the sum of its edge weights.
\end{problem}

While this problem is NP-hard, the main result of
\cite{even:garg:konemann:ravi:sinha} provides a polynomial-time
constant-factor approximation algorithm:

\begin{theorem}[Theorem 4 of \cite{even:garg:konemann:ravi:sinha}]
There is a polynomial-time 4-approximation\footnote{Technically, the
  result of Even et al.~guarantees a $(4+\epsilon)$-approximation for
  every $\epsilon$. We ignore the $\epsilon$ term here.}
algorithm for the Min-Max Spanning Tree problem.
In other words, if $E^*_1, E^*_2, \ldots, E^*_k$ is the optimal
solution and $E_1, E_2, \ldots, E_k$ the solution returned by the
algorithm,
\begin{align*}
  \max_i c(E_i) & \leq 4 \cdot \max_i c(E^*_i).
\end{align*}
\end{theorem}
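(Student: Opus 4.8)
The plan is to design a threshold-based algorithm and to analyze it against a guessed value of the optimum \OPTMINMAX. The central structural observation is that an optimal solution with $\max_i c(E^*_i) = B^*$ uses only edges of weight at most $B^*$, since every edge of a tree of cost at most $B^*$ has weight at most $B^*$. Consequently, if we delete from $G$ all edges of weight strictly greater than $B^*$ to obtain a graph $G_{B^*}$, then each optimal tree $E^*_i$, being connected through surviving edges, lies entirely inside a single connected component of $G_{B^*}$. Thus the optimal trees are partitioned among the components of $G_{B^*}$: writing $k_j$ for the number of optimal trees contained in component $H_j$, we have $\sum_j k_j \le k$.

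The algorithm itself guesses $B$, deletes all edges heavier than $B$, and, for each resulting component $H_j$, computes a minimum spanning tree $T_j$ and then chops $T_j$ into connected subtrees of bounded cost. To bound $c(T_j)$ I would argue as follows: the $k_j$ optimal trees covering $H_j$ form a subgraph of cost at most $k_j B^*$ that spans all vertices of $H_j$ in at most $k_j$ connected pieces; since $H_j$ is connected through edges of weight at most $B^*$, these pieces can be joined using at most $k_j - 1$ additional such edges into a connected spanning subgraph of $H_j$, so that $c(T_j) < 2 k_j B^*$.

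The heart of the argument is a tree-partitioning lemma: any tree all of whose edges have weight at most $B$ can be split, by a single depth-first traversal that closes off a subtree whenever its accumulated weight first reaches $2B$, into edge-disjoint connected subtrees each of cost strictly less than $3B$ (the bound holds because the last edge added has weight at most $B$), with all but at most one of them having cost at least $2B$. Applying this with $B = B^*$ to each $T_j$ and using $c(T_j) < 2 k_j B^*$ shows that $T_j$ yields at most $k_j$ subtrees, each of cost strictly less than $3 B^*$; summing over components produces at most $\sum_j k_j \le k$ subtrees in total, each of cost within a constant factor of $B^* = \OPTMINMAX$. (Isolated vertices of $G_{B^*}$, which can only arise if they are unreachable by a bounded-cost tree in the optimum as well, are handled by trivial cost-$0$ trees.) Bookkeeping the constants as in Even et al. yields the stated factor of $4$.

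Finally, since $B^*$ is not known and the deletion step depends on $B$ only through the set of edges it removes, I would run the decomposition using each of the $O(n^2)$ distinct edge weights as the threshold $B$, and output the collection of subtrees of smallest maximum cost among those runs that produce at most $k$ subtrees. Because the correct threshold, namely the largest edge weight appearing in the optimum (which is at most $\OPTMINMAX$), is among those tried and succeeds, the output has maximum cost $O(\OPTMINMAX)$. I expect the main obstacle to be the counting in the tree-partitioning step: one must ensure simultaneously that every subtree has cost bounded by a constant times $B$ and that the number of subtrees, summed over all components, does not exceed $k$. These two requirements pull against each other, since smaller pieces give a better per-piece cost bound but create more pieces, and the constant factor emerges precisely from balancing the connecting-cost loss in the bound on $c(T_j)$ against the cutting threshold used in the partition.
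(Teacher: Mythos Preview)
This theorem is not proved in the paper; it is quoted from Even et al.\ and used as a black box. The only place the paper touches its internals is in the proof of Lemma~\ref{lem:key-lower-bound}, where the algorithm is restated (Algorithm~\ref{algo:even}): guess a bound $B$, delete edges of weight exceeding $B$, take the MST of each surviving component, and edge-decompose each MST into subtrees with one piece of cost below $2B$ and the rest in $[2B,4B)$; declare failure if the total number of pieces exceeds $k$. Your sketch reproduces exactly this scheme, together with the standard feasibility argument (the optimum's trees sit inside components of $G_{B^*}$, and the MST of a component is cheap because the optimum's $k_j$ trees plus $k_j-1$ connecting edges span it). So as a reconstruction of Even et al.'s proof, your outline is on target.

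One small imprecision: your DFS description of the tree-partitioning step (``close off a subtree whenever its accumulated weight first reaches $2B$'') does not quite yield connected subtrees, and the $3B$ upper bound you state is optimistic. The decomposition used by Even et al.\ (and restated in the paper's Algorithm~\ref{algo:even}) processes each MST bottom-up and groups children's subtrees at each internal node, producing pieces of cost in $[2B,4B)$ rather than $[2B,3B)$; the $4B$ bound, not $3B$, is what drives the factor $4$ in the theorem. Since you explicitly defer the constants to Even et al., this does not break your argument, but the partitioning lemma as you phrased it would need a more careful statement to be correct.
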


Our algorithm is given formally as Algorithm~\ref{algo:log-approx}.

\begin{algorithm}[htb]
\begin{algorithmic}[1]
\FORALL{pairs $(x,y)$ of points}
  \STATE Let the cost of $(x,y)$ be the metric space distance $d(x,y)$.
\ENDFOR
\FOR{$i = 0$ \TO $M$}
\STATE Let $\theta_i = \min(n_i, 2^i)$.
\STATE Use the 4-approximation algorithm of
\cite{even:garg:konemann:ravi:sinha} (Algorithm~\ref{algo:even})
to compute
$\theta_i$ disjoint trees $T^i_1, T^i_2, \ldots, T^i_{\theta_i}$
such that each point of \AP[i] is in one of the $T^i_j$.
\FORALL{$j$}
  \STATE Shortcut the Eulerian tour of $T^i_j$ into a tour $\SCH^i_j$.
\ENDFOR
\ENDFOR
\STATE Let $\SCH_1, \SCH_2, \ldots, \SCH_J$ be an enumeration of
all (at most $n$) tours $\SCH^i_j$ by non-increasing $i$.\\
\COMMENT{If $i' > i$, then $\SCH^{i'}_{j'}$ must precede $\SCH^i_j$.
The ordering of tours with the same $i$ is arbitrary.}
\STATE Let $I = \ceil{\log_2 (J+1)} - 1$.
\FOR{$i = 0, \ldots, I-1$}
\STATE Let $L_i$ be the list $[\SCH_{2^i}, \SCH_{2^i+1}, \ldots, \SCH_{2^{i+1}-1}]$,
and $\lambda_i = 2^i$ the number of tours in $L_i$.
\ENDFOR
\STATE Let $L_I$ be the list $[\SCH_{2^I}, \SCH_{2^I+1}, \ldots, \SCH_{J}]$,
and $\lambda_I = J + 1 - 2^I$ the number of tours in $L_I$.
\STATE Output the following schedule $\hat{\SCH}$:
\FOR{each phase $j = 0, 1, 2, \ldots$} \label{line:outer-loop}
    \FOR{$i = 0$ \TO $I$}
        \STATE Execute the entire tour number $(j \mod \lambda_i)$ of $L_i$.
    \ENDFOR
\ENDFOR
\end{algorithmic}
\caption{The $O(\log n)$ approximation algorithm. \label{algo:log-approx}}
\end{algorithm}


The tours of high-weight points end up in lists $L_i$ with small values
of $i$.
They are toured exponentially more frequently than tours in lists
$L_i$ with large values of $i$.
A good example case to keep in mind in the following analysis is when
there are more than $2^i$ points of weight $2^{-i}$ --- then, each list $L_i$
contains exactly the $2^i$ tours of points of weight $2^{-i}$.
The more elaborate construction of lists $L_i$ is needed to
achieve the $O(\log n)$ guarantee.
The guarantee on the relationship between $L_i$ and points' weights is
captured by the following lemma:

\begin{lemma} \label{lem:list-for-weight}
For each $i$, each tour $\SCH^i_j$ in the list $L_i$,
and any point $x$ occurring in $\SCH^i_j$,
the weight of $x$ is at most $w_x \leq 2^{-i}$.
\end{lemma}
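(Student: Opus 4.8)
The plan is to trace how the tours $\SCH_1, \dots, \SCH_J$ get assigned to the lists $L_0, \dots, L_I$, and to combine two facts: (i) the enumeration $\SCH_1, \dots, \SCH_J$ is by \emph{non-increasing} $i$, so tours of heavier points come earlier in the global list; and (ii) each list $L_i$ consists of a contiguous block $[\SCH_{2^i}, \dots, \SCH_{2^{i+1}-1}]$ of that global list. So a tour landing in $L_i$ has global index at least $2^i$, which means at least $2^i - 1$ tours precede it, all of weight class at least as heavy. I would make this precise by counting how many tours $\SCH^{i'}_{j'}$ the algorithm produces for weight classes $i' < i$ (i.e.\ strictly heavier than $2^{-i}$) and for class $i' = i$ itself.

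First I would set up notation: for each weight level $i'$ the algorithm produces exactly $\theta_{i'} = \min(n_{i'}, 2^{i'}) \le 2^{i'}$ tours. Hence the total number of tours of weight class strictly greater than $2^{-i}$ (i.e.\ with index $i' < i$) is $\sum_{i'=0}^{i-1} \theta_{i'} \le \sum_{i'=0}^{i-1} 2^{i'} = 2^{i} - 1$. In the global enumeration $\SCH_1, \dots, \SCH_J$, all these tours come before any tour of weight class exactly $2^{-i}$; so every tour of weight class exactly $2^{-i}$ occupies a global position with index at least $2^i$ — wait, at most $2^i - 1$ tours precede it, so its global index is at least... I need to be careful: with at most $2^i-1$ heavier tours ahead of it, a tour of class $2^{-i}$ could have global index as small as... no. The point goes the other direction: I should argue that any tour whose global index is less than $2^i$ has weight class at most $i$, equivalently weight at least $2^{-i}$... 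Actually the cleanest route is: a tour of weight class $i'$ has all tours of classes $0, 1, \dots, i'-1$ strictly before it, and there are $\sum_{\iota < i'} \theta_\iota$ of them; so its global index exceeds $\sum_{\iota < i'} \theta_\iota \ge$ (nothing useful as a lower bound on its own). The correct monotonicity claim is the \emph{upper} bound: a tour of class $i'$ has global index at most $\sum_{\iota \le i'} \theta_\iota \le \sum_{\iota \le i'} 2^\iota = 2^{i'+1}-1$. Therefore $\SCH_m$ with $m \le 2^{i'+1}-1$ has weight class $\le i'$ only if... hmm, I want the contrapositive: every tour in $L_i = [\SCH_{2^i}, \dots, \SCH_{2^{i+1}-1}]$ has index $m \ge 2^i$, and I must show such a tour has weight class $\le i$, i.e.\ weight $\le 2^{-i}$.

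So the key step is: if $\SCH_m$ has weight class $i'$, then $m \le 2^{i'+1} - 1$, because all tours of classes $\le i'$ precede-or-include it and there are at most $\sum_{\iota=0}^{i'} 2^\iota = 2^{i'+1}-1$ such tours. Contrapositively, $m \ge 2^{i'+1}$ forces the weight class of $\SCH_m$ to be $> i'$. Now take any tour $\SCH^i_j \in L_i$; its global index is $m \ge 2^i$. If its weight class were some $i' < i$, then $m \le 2^{i'+1}-1 \le 2^i - 1 < 2^i$, a contradiction. Hence its weight class $i'$ satisfies $i' \ge i$, i.e.\ every point $x$ on $\SCH^i_j$ has $w_x = 2^{-i'} \le 2^{-i}$, which is exactly the claim. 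The only mild subtlety — the part I'd treat as the ``obstacle,'' though it is minor — is pinning down the exact off-by-one in the index bookkeeping and confirming that the geometric-sum bound $\sum_{\iota \le i'} \theta_\iota \le 2^{i'+1} - 1$ is tight enough; everything else is immediate from the definitions of $\theta_i$ and the enumeration order.
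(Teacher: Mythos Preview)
Your argument is correct and is exactly the paper's approach: bound the number of tours of weight class at most $i'$ by $\sum_{\iota=0}^{i'} \theta_\iota \leq 2^{i'+1}-1$, so any such tour has global index at most $2^{i'+1}-1$ and lands in some $L_{i''}$ with $i'' \leq i'$; contrapositively, a tour in $L_i$ (global index $\geq 2^i$) must have weight class at least $i$. The paper's proof is a terse two-line version of this same counting argument.
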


\begin{proof}
There is at most one tour $\SCH_1$ of points of weight $1$,
at most two tours of points of weight $\frac{1}{2}$;
more generally, there are $\theta_i \leq 2^i$ tours of points of weight $2^{-i}$.
Therefore, points of weight $2^{-i}$ must appear in one of the first
$2^i-1$ tours, and therefore be in $L_{i'}$ for $i' \leq i$.
\end{proof}

A key lemma for the analysis is the following lower bound on the
optimum MST cost for points of weight $2^{-i}$.
Recall here and below that the costs are the metric distances,
and for trees $T$ or tours \SCH,
we write $d(T)$ (or $d(\SCH)$) for the sum of costs of all edges in
$T$ or \SCH.

\begin{lemma} \label{lem:key-lower-bound}
For any $i$, let \OptTreeCost[i] be the cost of the minimum spanning
tree of \AP[i].
Then, for all $j$, we have that
$\theta_i \cdot d(T^i_j) \leq 4 \OptTreeCost[i]$.
\end{lemma}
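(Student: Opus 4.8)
The plan is to combine the $4$-approximation guarantee of Even et al.\ with a direct upper bound on the optimal min-max $\theta_i$-tree cover of $\AP[i]$ in terms of $\OptTreeCost[i]$.

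First I would unwind what we are handed. The trees $T^i_1,\dots,T^i_{\theta_i}$ are the output of the Even et al.\ algorithm run on $\AP[i]$ with parameter $k=\theta_i$, so for every $j$ we have $d(T^i_j)\le \max_{j'} d(T^i_{j'}) \le 4\mu_i$, where $\mu_i$ is the optimum value of the Min-Max Spanning Tree problem on $\AP[i]$ with $k=\theta_i$. Hence it suffices to prove $\theta_i\cdot\mu_i\le\OptTreeCost[i]$, i.e.\ to exhibit one feasible $\theta_i$-tree cover of $\AP[i]$ whose heaviest tree has cost at most $\OptTreeCost[i]/\theta_i$.

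Second, I would build such a cover directly from the minimum spanning tree $T^*$ of $\AP[i]$, which has cost exactly $\OptTreeCost[i]$ and, since $\theta_i\le n_i=|\AP[i]|$, has at least $\theta_i-1$ edges. Root $T^*$ arbitrarily and carve it into edge-disjoint connected subtrees by the standard greedy bottom-up rule: process the vertices in post-order, accumulate the weight hanging below the current vertex, and whenever that accumulated weight first reaches the threshold $B:=\OptTreeCost[i]/\theta_i$, cut off the corresponding subtree and continue. When every edge of $T^*$ has weight at most $B$, this produces at most $\ceil{\OptTreeCost[i]/B}=\theta_i$ connected subtrees, each of cost strictly less than $2B$, with every vertex lying in one of them; thus $\mu_i<2B$, and together with the Even et al.\ bound this already gives $\theta_i\cdot d(T^i_j)=O(\OptTreeCost[i])$. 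To pin down the precise constant $4$ I would track the Even et al.\ analysis more carefully, comparing the output not to $\mu_i$ itself but to the internal lower bound used there, which already absorbs part of the factor-$2$ splitting slack.

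The step I expect to be the main obstacle is handling \emph{heavy} edges of $T^*$ — edges of weight more than $B$ — because the clean splitting lemma above needs all edge weights bounded by the threshold. There are at most $\theta_i-1$ heavy edges, since their total weight is at most $\OptTreeCost[i]$, so the plan is to first delete all of them, which breaks $T^*$ into at most $\theta_i$ pieces whose edges are all light, then run the greedy carving inside each piece, and finally reattach any vertex left isolated (one incident only to heavy edges) to an adjacent piece through a single cheapest incident edge. Verifying that the number of final pieces stays at most $\theta_i$ — the ceilings can introduce an off-by-one when there are many pieces — and that the $O(\OptTreeCost[i]/\theta_i)$ per-tree bound survives the reattachments is the delicate part, and it is exactly where the constant in front of $\OptTreeCost[i]$ is fixed; I would either push through a short case analysis or, if the constant may be loosened, choose the threshold $B$ slightly below $\OptTreeCost[i]/\theta_i$ to create the needed room.
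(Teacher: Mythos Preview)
Your route through the black-box guarantee $d(T^i_j)\le 4\mu_i$ differs from the paper's and hits a genuine ceiling. The intermediate target you set, $\theta_i\mu_i\le\OptTreeCost[i]$, is false in general: for a star with center $c$ and three unit-distance leaves we have $\OptTreeCost[i]=3$, but with $k=\theta_i=2$ every feasible cover has a tree containing two leaves, so $\mu_i=2$ and $\theta_i\mu_i=4>3$. Your own splitting argument reflects this, yielding only $\mu_i<2\,\OptTreeCost[i]/\theta_i$ (and even that needs the heavy-edge and piece-count patching you flag), hence at best $\theta_i\,d(T^i_j)\le 8\,\OptTreeCost[i]$. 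So the black-box approach is correct for the $O(1)$ bound but cannot reach the constant~$4$.

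The paper bypasses $\mu_i$ entirely. It opens the Even et al.\ algorithm and shows directly that $\OptTreeCost[i]\ge kB$, where $B$ is the threshold on the successful run with $k=\theta_i$: deleting edges of weight $>B$ splits the point set into $c$ components, so the global MST contains at least $c-1$ edges of weight $\ge B$; the remaining $k-c$ output trees are edge-disjoint subtrees of that MST, each of weight $\ge 2B$. Summing gives $\OptTreeCost[i]\ge(k-c)\cdot 2B+(c-1)B\ge kB$ (with a one-line fix when $c=k$). Since every output tree has cost at most $4B$, this yields $\theta_i\,d(T^i_j)\le 4kB\le 4\,\OptTreeCost[i]$ immediately. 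Your closing remark---compare against ``the internal lower bound used there''---is pointing at exactly this argument; the paper simply does it from the outset, and the detour through $\mu_i$ and an independent splitting of the MST is unnecessary.
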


\begin{proof}
The proof of this lemma is directly based on the analysis of
\cite{even:garg:konemann:ravi:sinha}.
We need a slightly more detailed structural analysis of the algorithm
than expressed in the lemmas and theorems in
\cite{even:garg:konemann:ravi:sinha},
so we restate and analyze the algorithm here as Algorithm~\ref{algo:even}.
In the algorithm, $B$ is a guess for the optimum Min-Max Spanning Tree
objective value \OPTMINMAX.
When the algorithm terminates successfully,
it returns a solution of objective value at most $4B$,
while guaranteeing that $B \leq \OPTMINMAX$.
The analysis in \cite{even:garg:konemann:ravi:sinha} shows that a
successful value of $B$ is always found.

\begin{algorithm}[htb]
\begin{algorithmic}[1]
\STATE Remove all ``heavy'' edges of weight greater than $B$.\\
Let $\{G_i\}_i$ denote the connected components after deleting the heavy edges.
\STATE Let $\MST_i \gets \text{ minimum spanning tree of } G_i$.
\STATE Let $k_i \gets \floor{\frac{d(\MST_i)}{2B}}$.
\IF{$\sum_{i}(k_i + 1) > k$}
\RETURN{``Fail --- $B$ is too low''.}
\ELSE
\STATE Edge-decompose each tree $\MST_i$ into at most $(k_i+1)$ trees
$T^i_0, T^i_1, \ldots, T^i_{k_i}$
such that $d(T^i_0) < 2B$ and  $d(T^i_j) \in [2B, 4B)$ for all $j > 0$.
\RETURN{``Success --- set of trees $\{S^i_j\}_{i,j}$''.}
\ENDIF
\end{algorithmic}
\caption{The Min-Max Spanning Tree 4-approximation algorithm of
  \cite{even:garg:konemann:ravi:sinha}. \label{algo:even}}
\end{algorithm}

Consider a successful run of Algorithm~\ref{algo:even}.
Let $c$ be the number of connected components $G_i$ after removing the
heavy edges.
Because the MST of all points is connected, it must contain at least
$c-1$ edges of weight at least $B$.
Only the trees $T^i_0$ can have weight less than $2B$,
while all other trees $T^i_j$ have weight at least $2B$.
By the execution of Kruskal's Algorithm, each $\MST_i$ is a subtree of
the overall MST, and because each $\MST_i$ is broken into pieces to
produce the $T^i_j$, each $T^i_j$ is a subtree of the overall MST as
well.
Therefore, the cost of the overall MST is at least
$(k-c) \cdot 2B + (c-1) B$.
For $c < k$, this quantity is obviously lower-bounded by
$kB$.
When $c = k$, notice that at least one tree $T^i_j$ must have total
edge cost at least $\OPTMINMAX \geq B$ (since the solution is feasible),
so the overall cost of the MST is at least
$(k-1) B + \OPTMINMAX \geq kB$.

Because the maximum cost of any tree $T^i_j$ is at most $4B$,
we obtain the claimed bound.
\end{proof}




We are now ready to prove Theorem~\ref{thm:main-formal}.

\begin{extraproof}{Theorem~\ref{thm:main-formal}}
Notice that for every tour \SCH, we have that
$\MCost[2]{\SCH} \leq \MCost[\infty]{\SCH}$, because any convex
combination of absence lengths is upper-bounded by the maximum absence
length.
We will show the stronger statement that 
$\MCost[\infty]{\hat{\SCH}} \leq O(\log n) \cdot \MCost[2]{\SCH^*_{2}}$.
This implies the first part of the theorem because
$\MCost[2]{\hat{\SCH}} 
\leq \MCost[\infty]{\hat{\SCH}}
\leq O(\log n) \cdot \MCost[2]{\SCH^*_{2}}$,
and the second part of the theorem because
$\MCost[\infty]{\hat{\SCH}} \leq O(\log n) \cdot \MCost[2]{\SCH^*_{2}}
\leq O(\log n) \cdot \MCost[2]{\SCH^*_{\infty}}
\leq O(\log n) \cdot \MCost[\infty]{\SCH^*_{\infty}}$.

Write $\OPTVALUE = \MCost[2]{\SCH^*_2}$.
Consider an arbitrary point $\hat{x}$,
of weight $2^{-\hat{\imath}}$.
We want to upper-bound the absence length under $\hat{\SCH}$
between any two consecutive visits to $\hat{x}$.
By Lemma~\ref{lem:list-for-weight}, $\hat{x}$ must appear in a tour in
a list $L_{i'}$ for $i' \leq \hat{\imath}$.
In Line~\ref{line:outer-loop}, $\lambda_{i'}$
phases elapse before a return to $\hat{x}$.
Each of those phases traverses exactly one tour from each list
$L_{i}$, for $i = 0, \ldots, I$.
In addition, it requires moving from each tour to the next.

The tours $\SCH_j$ are disjoint,
and are themselves partitioned disjointly among the lists $L_i$.
Let $U_i$ be the set of points $x$ such that a tour in $L_i$ visits $x$.
The $U_i$ form a disjoint partition of \AP
(similar, but typically not identical, to the \AP[i]). 
We let $d(U_i, U_j) = \max_{x \in U_i, y \in U_j} d(x,y)$
be the maximum distance between any pair of points from $U_i, U_j$,
and $\hat{d}(L_i) = \max_{\SCH_j \in L_i} d(\SCH_j)$ the maximum total length
of any tour in $L_i$.

Then, the preceding argument implies that the total length of an
absence from $\hat{x}$ under $\hat{\SCH}$ is upper-bounded by
\begin{equation}
\lambda_{i'} \cdot
\sum_{i=0}^I \left( \hat{d}(L_i) + d(U_i, U_{(i+1) \mod I}) \right).
\label{eqn:objective-bound}
\end{equation}

We bound both terms under the sum separately.
First, we upper-bound
$d(U_i, U_{(i+1) \mod I}) \leq \max_{x,y \in \AP} d(x,y)$;
next, we prove that $d(x,y) \leq \OPTVALUE$.
Let $z$ be a point of weight $1$.
Because the optimum tour must at some point visit $x$, there is an
absence interval from $z$ of length at least $2d(x,z)$.
Similarly for $y$.
Therefore,
\[ \OPTVALUE \; \geq \; 1 \cdot \max(2d(x,z), 2d(y,z))
  \; \geq \; d(x,z) + d(y,z)
  \; \geq \; d(x,y).
\]

Next, we bound $\hat{d}(L_i) \leq O(\OPTVALUE)$.
We distinguish two cases.
If $\theta_i = n_i$, then each point of weight $2^{-i}$ is in its own
spanning tree and tour,
so $d(\SCH^i_j) = 0$ for all $j$.
In particular $\hat{d}(L_i) = 0$, and the claim holds trivially.

Otherwise, $\theta_i = 2^i$, and Lemma~\ref{lem:key-lower-bound}
implies that $d(T^i_j) \leq 4 \cdot 2^{-i} \cdot \OptTreeCost[i]$.
By the standard shortcutting argument for the Euclidean tour of a tree,
$d(\SCH^i_j) \leq 2 d (T^i_j)$.
We now show that $2^{-i} \cdot \OptTreeCost[i] \leq O(\OPTVALUE)$.
Consider the subtour $\SCH'$ of $\SCH^*_2$ induced only by points of
weight $2^{-i}$, skipping all other points.
For any point $x$ of weight $2^{-i}$, the objective value is
$2^{-i} \PCost[2]{x}{\SCH'} \leq 2^{-i} \PCost[2]{x}{\SCH^*_2} \leq \OPTVALUE$.
By Lemma~\ref{lem:quadratic-vs-TSP},
$\max_{x \in \AP[i]} \PCost[2]{x}{\SCH'} \geq \Omega(\OptTreeCost[i])$,
implying the claimed bound.

Substituting all these bounds into \eqref{eqn:objective-bound},
we can now upper-bound the length of absence from $\hat{x}$ by
$\lambda_{i'} \cdot (I+1) \cdot O(\OPTVALUE)$.

Because $\lambda_{i'} \leq 2^{i'}$ and the weight of $\hat{x}$ is
$2^{-\hat{\imath}} \leq 2^{-i'}$,
the weighted objective function value for $\hat{x}$ is at
most $O((I+1) \OPTVALUE)$.
There are $J \leq n$ tours,
so $I = \ceil{\log_2 (J + 1)} - 1 \leq \log_2 n$.
Finally, we lost only a factor of 2 due to rounding the weights to
powers of 2. 
Because $\hat{x}$ was chosen arbitrarily,
we have shown that Algorithm~\ref{algo:log-approx} is 
an $O(\ceil{\log_2 n}) = O(\log n)$ approximation.
\end{extraproof}

\begin{remark}
\begin{itemize}
\item In the proof, we bounded $I$ by $\log_2 n$.
The analysis also directly gives a bound of $O(M)$.
When the range of point weights is small (i.e., $M$ is small),
this bound may be better; in other words, the algorithm exploits
favorable point weights.

\item The constant in big-$O$ we obtain is large.
A more reasonable bound of $17 \log_2 n$ can be obtained for the
objective \MCost[\infty]{\hat{\SCH}}, by comparing the tour $\hat{\SCH}$
directly to $\SCH^*_{\infty}$ instead of $\SCH^*_2$,
and using Lemma~\ref{lem:TSP-to-weighted} instead of
Lemma~\ref{lem:quadratic-vs-TSP}.
The constant can be further improved to 13 by
using the improved 3-approximation algorithm for Min-Max Spanning Tree 
due to \cite{khani:salavatipour:min-max}
instead of the 4-approximation of \cite{even:garg:konemann:ravi:sinha}.
\end{itemize}
\end{remark}

\section{A Reduction from Security Games} \label{sec:additional-results}
We show that the security game mentioned in
Section~\ref{sec:security-games-intro} can be reduced to the weighted
quadratic TSP objective, at a loss of a constant factor in the
objective function.
In the security game of \cite{QuasiRegular},
there are $n$ targets with weights $w_x$.%
\footnote{In \cite{QuasiRegular}, all targets are at unit distance
  from each other. Our main goal here is to leverage our approximation
  algorithm for the weighted Quadratic TSP objective to obtain (weaker)
  guarantees for the version of the security game in which there is an
  arbitrary metric on the targets.} 
A defender commits to a distribution over \emph{schedules}
$\sigma = (\SCH, \tau)$,
consisting of a tour \SCH and a real-valued time offset $\tau$.
The intuition is that the defender will execute the tour \SCH,
but offset/delayed by $\tau$, to create uncertainty.%
\footnote{To be precise, the model of \cite{QuasiRegular} is entirely
defined in continuous time, as a mapping 
$\sigma: \mathbb{R}^{\geq 0} \to \AP \cup \SET{\perp}$,
with $\perp$ denoting that the defender is in transit between targets.
Such schedules allow staying at points for a positive amount of time,
something that we do not allow in this paper.}
The attacker then chooses one triple $(x, t_0, t)$;
here, $x$ is a target, $t_0$ the start time of the attack,
and $t$ its duration.
The attack $(x, t_0, t)$ is \emph{successful} if the defender does not
visit target $x$ in the interval $[t_0, t_0 + t]$ under $\sigma$.
The defender's cost (and the attacker's utility) is
\begin{align*}
  U(\sigma, (x, t_0, t)) & =
  \begin{cases}
   w_{x} \cdot t & \text{ if the attack is successful},\\
0 & \text{ otherwise}.
  \end{cases}
\end{align*}
The defender chooses a mixed strategy
(distribution $\Sigma$) over schedules $\sigma$
so as to minimize the attacker's expected utility.

Lemma B.3 of \cite{QuasiRegular} shows that without loss of
generality, the optimal mixed defender strategy $\Sigma^*$
is \emph{shift-invariant} in the following sense:
for each $x, t$, the distribution of return times to target $x$
starting from time 0 is the same as starting from time $t$.

\begin{lemma}[Lemma B.3 of \cite{QuasiRegular}] \label{lem:shift-invariant}
There is a shift-invariant optimal mixed defender strategy.
\end{lemma}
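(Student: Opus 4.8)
The plan is to take an arbitrary optimal mixed defender strategy $\Sigma^*$ and "symmetrize'' it over all time shifts, producing a new strategy that is shift-invariant without increasing the attacker's best response value. Concretely, for a real parameter $s \geq 0$, let $\Sigma^*_s$ denote the strategy that samples $\sigma = (\SCH, \tau)$ from $\Sigma^*$ and then outputs the shifted schedule $(\SCH, \tau + s)$; equivalently, the tour is executed as before but delayed by an additional offset $s$. The key observation is that for a fixed attacker triple $(x, t_0, t)$, the attack succeeds against $(\SCH, \tau+s)$ exactly when it succeeds against $(\SCH, \tau)$ with the attack interval shifted to $[t_0 - s, t_0 - s + t]$; hence the attacker's utility satisfies $U(\Sigma^*_s, (x, t_0, t)) = U(\Sigma^*, (x, t_0 - s, t))$. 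I would then define $\Sigma^{\mathrm{inv}}$ as the mixture that first draws $s$ uniformly from a large window $[0, W]$ (or, more cleanly, uses an averaging/limiting argument as $W \to \infty$) and then plays $\Sigma^*_s$.

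The first step is to verify shift-invariance of $\Sigma^{\mathrm{inv}}$: the distribution of return times to $x$ seen from time $t$ under $\Sigma^{\mathrm{inv}}$ equals the distribution seen from time $0$, because shifting the attack start time by $t$ can be absorbed into the averaging over $s$ (replacing $s$ by $s - t$), up to boundary effects of the window $[0,W]$ that vanish in the limit. The second step is to bound the attacker's value: for every triple $(x, t_0, t)$ we have
\begin{align*}
U(\Sigma^{\mathrm{inv}}, (x, t_0, t))
 &= \frac{1}{W} \int_0^W U(\Sigma^*, (x, t_0 - s, t)) \, ds
 \;\leq\; \sup_{t_0'} U(\Sigma^*, (x, t_0', t)),
\end{align*}
and taking the supremum over the attacker's choices shows the attacker gains nothing against $\Sigma^{\mathrm{inv}}$ beyond what it could already achieve against $\Sigma^*$. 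Since $\Sigma^*$ was optimal, $\Sigma^{\mathrm{inv}}$ is optimal as well, and it is shift-invariant by construction.

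The main obstacle is making the averaging argument rigorous in continuous time: a uniform distribution over $s \in [0, \infty)$ does not exist, so one must either work with the window $[0,W]$ and carefully control the $O(1/W)$ boundary discrepancy as $W \to \infty$, or invoke a weak-$*$ compactness / Banach limit argument to extract a genuinely shift-invariant limiting strategy. One must also confirm that the limiting object is still a well-defined mixed strategy over schedules (a probability measure on the appropriate space) and that the attacker's utility functional is continuous enough for the inequality above to pass to the limit. Since this is Lemma~B.3 of \cite{QuasiRegular}, I would cite their construction for these measure-theoretic details rather than reprove them, and simply note that the shift-averaging construction sketched above is exactly what is needed; the combinatorial content — that shifting an optimal strategy cannot help the attacker — is immediate from the definition of $U$.
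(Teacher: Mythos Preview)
The paper does not prove this lemma at all: it is imported verbatim as Lemma~B.3 of \cite{QuasiRegular} and simply cited. There is no proof in the present paper to compare your proposal against, and indeed you yourself note at the end that the right move is to cite \cite{QuasiRegular} for the construction. Your shift-averaging sketch is the standard argument for this kind of statement and is a reasonable outline of how such a proof would go, but for the purposes of this paper, the one-line citation is exactly what the authors do.
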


Lemma~\ref{lem:shift-invariant} allows us to assume w.l.o.g.~that the
attacker chooses $t_0 = 0$ (or any other fixed time),
since no time is better for him than any other.
This will be useful in the subsequent analysis.

We now lower-bound the objective function in terms of 
the cost functions \PCost[2]{x}{\SCH} for targets $x$.
The first step is to use Proposition 3.2 of \cite{QuasiRegular},
which does not rely on uniform distances between targets.

\begin{lemma}[Proposition 3.2 of \cite{QuasiRegular}]
\label{lem:attacker-lower-bound}
Consider any shift-invariant strategy $\Sigma$ and target $x$,
and let $T_x$ be the expected time until the next return of the defender
to target $x$ after time $0$ (under $\Sigma$).
By choosing an attack duration of
$t = \frac{T_x}{2}$,
the attacker can achieve expected utility at least
$w_x \cdot \frac{T_x}{4}$.
\end{lemma}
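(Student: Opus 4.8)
The plan is to exhibit one explicit attack and show it already earns the claimed payoff, using shift-invariance to dispose of the choice of start time. First I would invoke Lemma~\ref{lem:shift-invariant}: against a shift-invariant strategy $\Sigma$ the attacker's expected payoff from $(x,t_0,t)$ is independent of $t_0$, so I may fix $t_0=0$. The expected payoff of the attack $(x,0,t)$ is then exactly $w_x\cdot t\cdot p_x(t)$, where $p_x(t):=\Pr_\Sigma[\text{the defender does not visit } x \text{ during } [0,t]]$ is the success probability. Two properties of $p_x$ drive everything: it is non-increasing with $p_x(0)=1$, and $\int_0^\infty p_x(t)\,dt=T_x$, since $\int_0^\infty\Pr[\text{no visit during }[0,t]]\,dt$ is the expected time until the first visit to $x$, which is $T_x$.

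The heart of the argument is the linear lower bound $p_x(t)\ge 1-t/T_x$. Under a shift-invariant strategy the visits to $x$ form a stationary process, so the expected number of visits to $x$ inside a window of length $t$ equals $\lambda_x t$ for a fixed intensity $\lambda_x$; identifying $T_x$ with the mean spacing between consecutive visits (equivalently $\lambda_x=1/T_x$), the expected number of visits during $[0,t]$ is $t/T_x$, and a first-moment (union) bound gives $1-p_x(t)=\Pr[\ge 1\text{ visit during }[0,t]]\le t/T_x$. Setting $t=T_x/2$ yields $p_x(T_x/2)\ge\tfrac12$, so the attack $(x,0,T_x/2)$ earns at least $w_x\cdot\tfrac{T_x}{2}\cdot\tfrac12=w_x\cdot\tfrac{T_x}{4}$, which is the claim. (Alternatively one need not commit to $t=T_x/2$: letting the attacker pick $t$ to maximize $t\,p_x(t)$ and combining $\int_0^\infty p_x=T_x$ with convexity of $p_x$ — a periodic tour averaged over a uniformly random offset contributes the convex survival function $t\mapsto\tfrac1P\sum_i\max(L_i-t,0)$, and mixtures of convex functions are convex — gives $\max_t t\,p_x(t)\ge T_x/e>T_x/4$.)

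The step I expect to be delicate is exactly the identification $\lambda_x=1/T_x$: one must be sure that ``the expected time until the next return'' in the statement is the mean inter-visit spacing (the reciprocal of the visit intensity) and not the mean residual time seen from a uniformly random instant — these differ by the usual inspection-paradox factor, and the union bound above is tight only for the former. I would make this precise by a short renewal-reward / Palm-calculus computation under the shift-invariant measure, or, following \cite{QuasiRegular}, by first importing their structural result that an optimal schedule may be taken (quasi-)regular, which bounds the spread of the absence lengths tightly enough that ``no visit within $T_x/2$'' has probability at least $\tfrac12$ outright.
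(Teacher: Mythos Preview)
The paper does not prove this lemma at all; it is quoted from \cite{QuasiRegular} and used as a black box, so there is no ``paper's proof'' to compare against. Your setup is correct: against a shift-invariant $\Sigma$ the payoff of the attack $(x,0,t)$ is $w_x\,t\,p_x(t)$ with $p_x(t)=\Pr_\Sigma[R>t]$ and $T_x=\int_0^\infty p_x(t)\,dt=\mathbb{E}[R]$, where $R$ is the residual time to the next visit. The delicate identification you flag is exactly the crux, and your worry is justified: the lemma's $T_x$ is the expected \emph{residual} time (this is how the paper uses it two paragraphs later, lower-bounding $T_x$ by the expected half-sum-of-squares of absence lengths), not the mean spacing $1/\lambda_x$. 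With that reading your union bound gives only $p_x(t)\ge 1-\lambda_x t$, and since $\lambda_x T_x=\mathbb{E}[L^2]/(2\mathbb{E}[L]^2)$ can be arbitrarily large, this says nothing at $t=T_x/2$.

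In fact the statement with the specific choice $t=T_x/2$ is false under this reading, so no fix can rescue it. Take a periodic schedule whose absences from $x$ consist of one interval of length $1$ and many intervals of length $\epsilon$ with total length $c$, and a uniformly random offset; letting $\epsilon\to 0$ one gets $T_x=\tfrac{1}{2(1+c)}$ and $p_x(t)=\tfrac{1-t}{1+c}$ for $t\in(0,1)$, hence $p_x(T_x/2)=\tfrac{4c+3}{4(1+c)^2}$, which is already below $\tfrac12$ at $c=1$ and tends to $0$ as $c\to\infty$. Your alternative route via convexity of $p_x$ is the right salvage: it yields $\sup_t t\,p_x(t)\ge c_0\,T_x$ for a universal constant $c_0$ (and in the example above the optimal $t$ is $\tfrac12$, not $T_x/2$). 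That existential statement is all the paper needs downstream, but it is strictly weaker than the lemma as written; your proposed Palm computation or the quasi-regular structure from \cite{QuasiRegular} cannot recover the specific $t=T_x/2$ claim in general metrics, because the counterexample above already rules it out.
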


Our next goal is to lower-bound $T_x$.
Suppose that the attacker draws his start time $t_0$ uniformly at
random from the interval $[0, \hat{t}]$, for some value of $\hat{t}$.
Consider the distribution of the defender's next return time to
$x$ after time $t_0$.
By shift-invariance,
this distribution is the same for all possible draws $t_0$. 
We lower-bound the expected return time by pretending that the
defender returns to all targets at time $\hat{t}$.
Consider a fixed (i.e., non-random) defender schedule
$\sigma = (\SCH, \tau)$.
The expected return time to target $x$ under $\sigma$
for a uniformly random attack start time $t_0$
is exactly half the sum of squares of interval lengths between
consecutive visits to $x$ over $[0, \hat{t}]$.
Denote this quantity by $s(\sigma, x, \hat{t})$.
We can lower-bound
$T_x \geq \Expect[\sigma = (\SCH, \tau) \sim \Sigma]{%
  \frac{1}{\hat{t}} \cdot s(\sigma, x, \hat{t})}$.
By taking the limit of $\hat{t} \to \infty$,
and noticing that
$\lim_{\hat{t} \to \infty} \frac{1}{\hat{t}} \cdot s(\sigma, x, \hat{t}) =
\PCost[2]{x}{\SCH}$, we get that
$T_x \geq \Expect[\sigma = (\SCH, \tau) \sim \Sigma]{\PCost[2]{x}{\SCH}}$.
Overall, we have shown that the attacker's expected utility
against a mixed defender strategy $\Sigma$ is at least
$\Omega\left(\max_x
  \Expect[\sigma = (\SCH, \tau) \sim \Sigma]{w_x \PCost[2]{x}{\SCH}}\right)$.

Below, we prove Lemma~\ref{lem:derandomize},
which shows that we can approximate this objective with a single
periodic tour and a uniformly random offset.

\begin{lemma} \label{lem:derandomize}
For every mixed defender strategy $\Sigma$,
there is a single periodic tour \SCH with
$\MCost[2]{\SCH} \leq O(1) \cdot
\max_x \left(\Expect[\sigma = (\SCH, \tau) \sim \Sigma]{w_x \PCost[2]{x}{\SCH}}\right)$.
\end{lemma}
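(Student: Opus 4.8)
The plan is to derandomize $\Sigma$ in two stages: first, reduce it to a distribution supported on \emph{finitely many periodic} tours, each of which visits every point; then splice those tours into a single periodic tour, giving tour $k$ a time-share proportional to its probability.

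Write $B = \max_x w_x \Expect[\sigma \sim \Sigma]{\PCost[2]{x}{\SCH}}$ for the right-hand side without the hidden constant, and assume $B < \infty$ (otherwise there is nothing to prove). The key preliminary observation is that $\Expect[\sigma \sim \Sigma]{\PCost[2]{x}{\SCH}} \le B/w_x < \infty$ forces, for each of the $n$ points $x$, that a $\Sigma$-random tour visits $x$ infinitely often with probability $1$ --- a tour visiting $x$ only finitely often has $\PCost[2]{x}{\SCH} = \infty$. Hence $\Sigma$-almost every tour visits \emph{every} point infinitely often. I would then apply Lemma~\ref{lem:periodic-is-near-OPT} to each tour in the support; its proof produces, for every tour, a periodic $\SCH'$ with $\PCost[2]{x}{\SCH'} \le 3\,\PCost[2]{x}{\SCH}$ \emph{simultaneously} for all $x$, and $\SCH'$ still visits every point that $\SCH$ does. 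This replaces $\Sigma$ by a distribution over periodic tours, losing a factor $3$ in each per-point expectation; since there are only countably many periodic tours, the new distribution is discrete. Finally, truncating to a set $\{\SCH^{(1)},\dots,\SCH^{(K)}\}$ of tours carrying at least half the mass and renormalizing only doubles each expectation, because $\PCost[2]{x}{\cdot} \ge 0$. The outcome is a finite list of periodic tours $\SCH^{(1)},\dots,\SCH^{(K)}$, each visiting every point, with weights $q_1,\dots,q_K \ge 0$ summing to $1$, such that $\sum_{k=1}^K q_k \PCost[2]{x}{\SCH^{(k)}} \le 6\,B/w_x$ for every point $x$.

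Given this list, I would build $\hat\SCH$ as follows: let $D_k$ be the total length of one period of $\SCH^{(k)}$, fix a large constant $R$, and set $m_k = \lceil R q_k / D_k \rceil$. One period of $\hat\SCH$ consists of $K$ blocks in order, where block $k$ runs $\SCH^{(k)}$ exactly $m_k$ times in a row (seamless, since $\SCH^{(k)}$ is periodic) and is followed by a single move to the start of the next block; the whole period then repeats. Since $m_k D_k \in [R q_k, R q_k + D_k)$, the period length $P$ satisfies $P \ge R$ and $P = R + O(1)$. To analyze $\PCost[2]{x}{\hat\SCH}$ for a fixed $x$ --- which for a periodic tour is the sum of squared absence lengths over one period, divided by $P$ --- I would split $x$'s absences within one period of $\hat\SCH$ into those contained in a single block and the $K$ that straddle a block boundary. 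The absences inside block $k$ form a sub-multiset of the absences of the periodic tour $(\SCH^{(k)})^{m_k}$, whose squared lengths sum to $m_k D_k \PCost[2]{x}{\SCH^{(k)}} < (R q_k + D_k)\PCost[2]{x}{\SCH^{(k)}}$; summing over $k$ gives at most $6\,RB/w_x + O(1)$. Each boundary absence runs from the last visit to $x$ in one block to the first visit to $x$ in the next, and --- since \emph{every} $\SCH^{(k)}$ visits $x$ --- has length at most $D_k + D_{k+1} + \max_{y,z} d(y,z) = O(1)$, so the $K$ of them contribute only $O(1)$ to the sum of squares. Dividing by $P \ge R$ gives $\PCost[2]{x}{\hat\SCH} \le 6\,B/w_x + O(1/R)$, which is at most $12\,B/w_x$ once $R$ is large enough (depending only on the finitely many $\SCH^{(k)}$, not on $x$). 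Taking the maximum over $x$, $\MCost[2]{\hat\SCH} = O(1)\cdot B$.

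The main obstacle is exactly the treatment of the boundary absences: a naive block concatenation is disastrous if some target is visited by only one tour in the support, since it would then be absent for essentially an entire super-period. The resolution is the opening observation --- finiteness of $\Expect[\sigma \sim \Sigma]{\PCost[2]{x}{\SCH}}$ forces every point to be visited infinitely often by almost every tour --- which is what makes block-wise splicing safe. A secondary technical point, that $\Sigma$ may have uncountable support, is handled by first passing to periodic tours (of which there are only countably many) and then truncating, each step costing only a constant factor.
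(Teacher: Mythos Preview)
Your approach is essentially the paper's: reduce via Lemma~\ref{lem:periodic-is-near-OPT} to a discrete distribution over periodic tours, truncate to a finite subset carrying at least half the mass, and then concatenate blocks of each tour with repetition counts proportional to probability, bounding the in-block and boundary contributions to the sum of squared absences separately. You are in fact slightly more explicit than the paper in arguing that every tour in the (truncated) support must visit every point---this is exactly what makes the boundary absences $O(1)$, and the paper uses it implicitly when it upper-bounds a boundary absence by $2\bar D$---but otherwise the two proofs differ only cosmetically in how the repetition parameter ($R$ for you, $K$ in the paper) is chosen.
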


We can now finish the reduction by upper-bounding the defender's cost
under the periodic tour from Lemma~\ref{lem:derandomize}.

\begin{lemma} \label{lem:attacker-upper-bound}
For any periodic tour \SCH with uniformly random offset $\tau$,
by attacking target $x$,
the attacker can obtain expected utility at most 
$w_x \cdot \frac{1}{2} \cdot \PCost[2]{x}{\SCH}$. 
\end{lemma}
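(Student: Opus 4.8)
The plan is to directly compute the attacker's expected utility when the defender plays the periodic tour $\SCH$ with a uniformly random offset $\tau$, and the attacker fixes a target $x$ and (by Lemma~\ref{lem:shift-invariant}) a start time $t_0 = 0$, choosing only the attack duration $t$ to maximize utility. Since $\SCH$ is periodic, say with period length $P = d_{\SCH}(1, \hat t + 1)$, the offset $\tau$ (reduced mod $P$) lands uniformly inside one of the absence intervals of $x$: if the consecutive-visit absence intervals from $x$ within one period have lengths $a_1, a_2, \ldots, a_m$ (so $\sum_j a_j = P$), then $\tau$ falls into interval $j$ with probability $a_j / P$, and conditioned on that, the time from $0$ until the next visit to $x$ is uniform on $[0, a_j]$.

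Next I would analyze the attacker's best response for a fixed duration $t$. Given $\tau$ landing in interval $j$ with residual waiting time $r$ (uniform on $[0,a_j]$), the attack $(x, 0, t)$ succeeds iff $t < r$, i.e. iff the defender does not return within $[0,t]$. So the success probability conditioned on interval $j$ is $\max(0, 1 - t/a_j) = (a_j - t)^+ / a_j$, and the overall expected utility is
\begin{equation}
w_x \cdot t \cdot \sum_{j=1}^m \frac{a_j}{P} \cdot \frac{(a_j - t)^+}{a_j}
= \frac{w_x \, t}{P} \sum_{j : a_j > t} (a_j - t).
\label{eqn:attacker-util-periodic}
\end{equation}
The attacker picks $t \ge 0$ maximizing the right-hand side. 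I would bound this maximum by the simple estimate $\sum_{j : a_j > t}(a_j - t) \le \sum_{j : a_j > t} a_j \le \sum_j a_j = P$ is too weak; instead use $t \cdot (a_j - t) \le t \cdot a_j$ together with $t \le a_j$ on the active terms, giving $t \sum_{j: a_j > t}(a_j - t) \le \sum_{j: a_j>t} a_j (a_j - t) \le \sum_j a_j^2$. Hence the expected utility is at most $w_x \sum_j a_j^2 / P = w_x \cdot \PCost[2]{x}{\SCH}$, where the last equality is the definition \eqref{eqn:point-cost-p} specialized to a periodic tour (the $\limsup$ is a genuine limit equal to $\big(\sum_j a_j^2\big)/P$). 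To shave the extra factor of $2$, sharpen the pointwise bound: on the active terms, $t(a_j - t) \le a_j^2/4$ since $t(a_j - t)$ is maximized at $t = a_j/2$; summing over the (at most all) $m$ intervals and then being slightly more careful — actually the cleanest route is to note $t \sum_{j}(a_j - t)^+ \le \sum_j \tfrac14 a_j^2 \cdot [\text{hmm, need } t \text{ common}]$ — so I would instead keep $t$ fixed and write $t(a_j - t)^+ \le \tfrac{1}{2} a_j t \le \tfrac12 a_j^2$ only when $t \le a_j$, yielding the bound $w_x \sum_j a_j^2 / (2P) = \tfrac12 w_x \PCost[2]{x}{\SCH}$.

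The main obstacle is getting the constant exactly $\tfrac12$ rather than a worse constant, which forces the careful per-interval inequality above (using $t \le a_j$ on active terms) rather than a crude global bound; everything else is bookkeeping. A secondary subtlety is handling the random offset correctly — one must verify that $\tau \bmod P$ is uniform on $[0,P)$ and that this induces exactly the size-biased distribution over intervals with a uniform conditional residual, which is where periodicity and Lemma~\ref{lem:shift-invariant} are used. I do not expect difficulties identifying $\big(\sum_j a_j^2\big)/P$ with $\PCost[2]{x}{\SCH}$ since the tour is periodic, so Remark item 2 applies directly.
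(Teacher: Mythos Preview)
Your overall strategy---compute the attacker's expected utility explicitly as a function of the duration $t$ and then bound the maximum---is sound and more hands-on than the paper's argument, but the final inequality you settle on is wrong. You claim $t(a_j - t)^+ \le \tfrac{1}{2} a_j t$ on the active terms $t \le a_j$; this fails whenever $t < a_j/2$, since then $a_j - t > a_j/2$. Ironically, the bound you wrote down first and then abandoned is the one that actually works: $t(a_j - t)^+ \le a_j^2/4$ holds for \emph{every} $t \ge 0$ and every $j$ simultaneously (the function $s \mapsto s(a_j - s)^+$ is maximized at $s = a_j/2$), so there is no issue with ``$t$ common.'' Summing over $j$ gives $t \sum_j (a_j - t)^+ \le \tfrac{1}{4}\sum_j a_j^2$ for all $t$, hence expected utility at most $\tfrac{1}{4} w_x \PCost[2]{x}{\SCH}$, which is even stronger than the stated $\tfrac{1}{2}$.

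The paper takes a shorter, less computational route: by shift-invariance the attacker may as well start at time $0$, and even an attacker who \emph{knew} the residual return time $R$ could earn at most $w_x R$; hence the expected utility is bounded by $w_x \cdot \Expect{R}$. For a periodic tour with uniform offset, $R$ is size-biased over intervals with uniform conditional residual, so $\Expect{R} = \sum_j (a_j/P)\cdot(a_j/2) = \tfrac{1}{2}\PCost[2]{x}{\SCH}$. Your explicit computation recovers (and slightly sharpens) this, but the paper's coupling argument avoids the optimization over $t$ entirely.
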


\begin{proof}
Because the offset $\tau$ is uniformly random,
the strategy is shift-invariant,
and the attacker attacks at time 0 without loss of generality.
Even if he exactly knew when after time 0 the defender would return
next, the attacker could not in expectation attack longer than the
expected time until the defender's next return.
For a periodic schedule \SCH, the expected time until the defender's
next return is $\frac{1}{2} \cdot \PCost[2]{x}{\SCH}$.
\end{proof}

Thus, we have shown that the objective value of the 
optimum finite periodic tour \SCH is within a constant factor of the
optimum randomized defender strategy.
Therefore, any approximation guarantee for the \MCost[2]{\SCH}
objective for finite periodic tours carries over --- at the loss of a
constant factor --- to the security game objective from
\cite{QuasiRegular}.
In particular, using the result from Theorem~\ref{thm:main-formal},
we obtain a polynomial-time $O(\log n)$ approximation algorithm for
finding a defender strategy minimizing a rational attacker's expected
utility.

\begin{extraproof}{Lemma~\ref{lem:derandomize}}
By Lemma~\ref{lem:attacker-lower-bound} and the subsequent discussion,
the attacker's utility from attacking $x$
against a mixed defender strategy $\Sigma$ is
$\Omega\left(\Expect[\sigma = (\SCH, \tau) \sim \Sigma]{w_x \PCost[2]{x}{\SCH}}\right)$.
By Lemma~\ref{lem:periodic-is-near-OPT},
at a cost of a factor 3 in the \PCost[2]{x}{\SCH},
we can ensure that all tours \SCH in the support of $\Sigma$ are
periodic.
Then, $\Sigma$ is of the following form:
(1) Choose a (finite, periodic) tour \SCH from some distribution.
Let $D$ be the total length of executing the finite tour once.
(2) Shift the starting point of the tour to a uniformly random point
in the (continuous) interval $\tau \in [0,D]$.

Because there are only countably many periodic tours \SCH,
the distribution in step (1) must be discrete.
For each tour \SCH, let $p_{\SCH} \geq 0$ be the probability of
choosing \SCH.
Sort the tours \SCH such that $p_{\SCH_1} \geq p_{\SCH_2} \geq \cdots$;
for ease of notation, write $p_i = p_{\SCH_i}$.
Because $\lim_{k \to \infty} \sum_{i=1}^k p_i = 1$, there must
be a finite $k$ such that $\sum_{i=1}^k p_i \geq \frac{1}{2}$;
fix the smallest such $k$, and let $q = p_k$.

For each of the $\SCH_i$, let $D_i$ be the total length of one
traversal of the tour.
Let $\bar{D} = \max_{i=1, \ldots, k} D_i$ be the length of the longest of the
tours.
Let $K = \max_{i,x} \frac{8 \bar{D}}{\PCost[2]{x}{\SCH_i}}$.
Let $N_i := \ceil{K \cdot \frac{p_i}{q} \cdot \frac{\bar{D}}{D_i}}$.

Consider the finite, periodic tour $\hat{\SCH}$
that first performs $N_1$ iterations  of  $\SCH_1$, 
then $N_2$ iterations of  $\SCH_2$, 
then $N_3$ iterations of  $\SCH_3$,
and so on until $N_k$ iterations of $\SCH_k$.
(Rotate each tour so that the start point of the next tour is the end
point of the previous tour.)
Let $\hat{D} = \sum_{i=1}^k N_i \cdot D_i$ be the total length
of this tour.
Randomize by choosing a uniformly random start time
from $[0, \hat{D}]$.
We claim that this strategy has objective value within a constant
factor of that of $\Sigma$.

Fix any point $x$.
Because we are simply leaving out some terms from the sum (without
renormalizing), we get that
$\Expect[\SCH \sim \Sigma]{\PCost[2]{x}{\SCH}}
\geq \sum_{i=1}^k p_i \PCost[2]{x}{\SCH_i}$.
Now consider \PCost[2]{x}{\hat{\SCH}}.
It is exactly the expected length of the absence interval from $x$
for a uniformly randomly chosen time $t \in [0, \hat{D}]$.

The total time of $\hat{\SCH}$ that is devoted to tours $\SCH_i$ for a
given $i$ is $N_i D_i \leq 2K \cdot \frac{p_i}{q} \cdot \bar{D}$.
Conditioned on the random time $t$ falling into an interval devoted to
$\SCH_i$, the expected absence length from $x$ is almost exactly
\PCost[2]{x}{\SCH_i}.
The only problems could arise in the first or last iteration, where
the next (or previous) visit to $x$ may belong to a different tour,
and have larger expectation.
In that case, we upper-bound the absence length by $2\bar{D}$.
Therefore, the expected absence length is at most

\begin{align*}
\frac{(N_i-2) D_i \PCost[2]{x}{\SCH_i} + 2 (2\bar{D})^2}{N_i \cdot D_i}
  & \leq \PCost[2]{x}{\SCH_i} + \frac{8 \bar{D}^2}{N_i \cdot D_i}
  \; \leq \; \PCost[2]{x}{\SCH_i} + \frac{q \PCost[2]{x}{\SCH_i}}{p_i}
  \; \leq \; 2 \PCost[2]{x}{\SCH_i}.
\end{align*}

The probability that the random time $t$ falls into an interval
devoted to $\SCH_i$ is
\begin{align*}
\frac{N_i D_i}{\sum_j N_j D_j}
  & \leq \frac{2K \frac{p_i}{q} \cdot \bar{D}}{%
    \sum_j K \frac{p_j}{q} \cdot \bar{D}}
  \; = \; \frac{2 p_i}{\sum_j p_j}
  \; \leq \; 4 p_i,
\end{align*}
because the sum of probabilities $p_j$ is at least \half.
In summary, the expected absence length from $x$ is at most
$\sum_i (4p_i) \cdot (2 \PCost[2]{x}{\SCH_i})
\leq 8 \Expect[(\SCH, D) \sim \Sigma]{\PCost[2]{x}{\SCH}}$.
\end{extraproof}

\section{Conclusion} \label{sec:conclusions}
Motivated in part by applications in security games such as
anti-burglary or anti-poaching patrols,
as well as an economic view of the traveling salesman problem,
we defined a class of weighted traveling
salesman problems in which the goal is to visit the points of a
metric space repeatedly,
while minimizing the weighted duration of absences from the points.
We gave a combinatorial algorithm that simultaneously achieves an
$O(\log n)$ approximation guarantee for two natural objective
functions.


One can generalize the cost functions for individual points to
\[\PCost[p]{x}{\SCH}
  = \lim_{k \to \infty}
  \frac{\sum_{k'=0}^{k} (d_{\SCH}(\VSch[x]{k'},\VSch[x]{k'+1}))^p}{%
    \sum_{k'=0}^{k} (d_{\SCH}(\VSch[x]{k'},\VSch[x]{k'+1}))^{p-1}}.\]
Then, the two special cases we analyzed are $p=2$ and $p\to\infty$.
Because for any fixed $x, \SCH$, the cost function
\PCost[p]{x}{\SCH} is monotone non-decreasing in $p$,
so is \MCost[p]{\SCH}.
We showed in Section~\ref{sec:main-result} that the algorithm's
tour $\hat{\SCH}$ satisfies
$\MCost[\infty]{\hat{\SCH}} \leq O(\log n) \cdot \MCost[2]{\SCH^*_2}$.
This immediately implies that Algorithm~\ref{algo:log-approx} is an
$O(\log n)$ approximation for \MCost[p]{\SCH} for all $p \geq 2$.
However, we are not aware of any natural motivation for studying
the objective for $p \in (2, \infty)$.
For $p < 2$, the objective function suffers, in special cases,
from rewarding repeated visits to the same point (without traveling
any distance in between), which is why we did not study it.

The most obvious open question left by our work is whether the
approximation guarantee can be improved to a constant.
For many special cases, natural modifications of 
Algorithm~\ref{algo:log-approx} do give constant-factor
approximations. It seems plausible that a more careful interleaving of
subtours could lead to such an improved guarantee.

Instead of minimizing the \emph{maximum} of the weighted point
objectives, one could study a weighted \emph{sum}.
Such an objective would be motivated if a need for the salesman's
product arises with a uniformly randomly chosen person,
and we want to minimize the expected time to serve the demand.
First notice that in this version, the weights can be assumed to be
uniform, as a point of weight $w$ can be replaced by $w$ points
of weight 1, without any change in the objective.
The resulting problem becomes more similar to the well-known Traveling
Repairman (a.k.a.~Minimum Latency) problem \cite{arora:karakostas:latency,BCCPRS:minimum-latency,goemans:kleinberg:latency}.
However, even for the (unweighted) sum version of our problem,
the standard techniques for the Minimum Latency problem do not appear
to carry over, 
for reasons similar to those discussed in Section~\ref{sec:related-work}.
Algorithm~\ref{algo:log-approx} in the case of no weights simply
returns a TSP tour of all points.
It is not difficult to construct examples where this can be
polynomially far from optimal.
Hence, studying the sum (instead of maximum) version of the problem is
an interesting direction for future work.

Another natural question, motivated by the security game application,
is whether one can also obtain good approximation guarantees for the
same objective in the presence of multiple ``salesmen;''
for example, national parks will typically have multiple anti-poaching
patrols simultaneously.
The work on Min-Max Tree Covers
\cite{even:garg:konemann:ravi:sinha,khani:salavatipour:min-max}
we used as a key tool was motivated by a similar goal of dividing
points between multiple TSP tours.
See also \cite{arkin:hassin:levin:vehicle,ForestMultirobot},
and \cite{fakcharoenphol:harrelson:rao:repairman} for a discussion of
the $k$-Traveling Repairman Problem.

We studied two natural cost minimization versions of the weighted TSP.
In recent work, motivated by some
of the same application domains (such as poaching prevention),
Immorlica and Kleinberg \cite{immorlica:kleinberg:bandits} defined a
natural prize-collecting problem with a time component.
In their problem, distances between ``points'' are uniform.
Each point $x$ has an increasing concave reward function $f_x$.
When a point is visited after an absence interval of length $t$, 
it yields a reward of $f_x(t)$.
This objective also naturally motivates visiting more valuable points
more frequently, and spacing out visits roughly evenly.
It is an interesting question whether the techniques in our paper can
be adapted to obtain provable approximation guarantees for the
prize-collecting version of \cite{immorlica:kleinberg:bandits} under
non-uniform distances.
One possible difficulty is that while in our cost-minimization
formulation, each point must be visited infinitely often to prevent
infinite cost, the prize-collecting version can simply omit some
points if they are far away and/or provide little reward.

\subsubsection*{Acknowledgments}
We would like to thank Shaddin Dughmi, Bobby Kleinberg, Kamesh
Munagala, Debmalya Panigrahi, David Shmoys, and Omer Tamuz for useful
discussions, and anonymous reviewers for useful feedback.

\bibliographystyle{plain}
\bibliography{davids-bibliography/names,davids-bibliography/conferences,davids-bibliography/bibliography,davids-bibliography/publications,bibliography/paper-specific}

\end{document}